\documentclass{article}
\usepackage{amsthm}
\usepackage{amsmath}
\usepackage{amsfonts}
\usepackage{amssymb}
\usepackage{bbm}
\usepackage{enumitem}
\usepackage{geometry}
\usepackage[flushleft]{threeparttable}
\usepackage{hyperref}
\hypersetup{
    colorlinks=true,
    linkcolor=red,
    filecolor=magenta,      
    urlcolor=cyan,
    citecolor=blue,
}
\usepackage{indentfirst}
\usepackage{appendix}
\usepackage{natbib}
\usepackage{booktabs}
\usepackage{graphicx}
\usepackage{float}
\usepackage[symbol]{footmisc}
\usepackage{authblk}
\usepackage{booktabs}
\usepackage{multirow}

\usepackage{xpatch}
\makeatletter
\AtBeginDocument{\xpatchcmd{\@thm}{\thm@headpunct{.}}{\thm@headpunct{}}{}{}}
\makeatother

\graphicspath{{figures/}}
\DeclareMathOperator*{\Var}{Var}
\DeclareMathOperator*{\Cov}{Cov}

\newtheorem{theorem}{Theorem}
\newtheorem{corollary}{Corollary}
\newtheorem{lemma}{Lemma}
\newtheorem{assumption}{Assumption}
\newtheorem{example}{Example}

\title{\textbf{Subsampling Under Two-way Clustering with Serial Correlation}\thanks{We thank Harold D. Chiang and Bruce E. Hansen for their valuable advise. We are also grateful to the comments from the participants of the econometrics lunch presentation in UW-Madison and the New York Camp Econometrics XX, including Yong Cai, Ulrich Hounyo, Jack Porter, Xiaoxia Shi, Yuya Shimizu, and Kohei Yata.}}
\author{Haonan Miao}
\affil{Department of Economics, University of Wisconsin-Madison}
\date{\today}
\begin{document}
\maketitle

\begin{abstract}
    We prove the validity of using subsampling method for inference under a two-way clustered panel in which the time effects are serially correlated. Subsamples should be drawn without replacement from randomly partitioned individual index set and consecutive blocks of time effects. We present two subsampling inference methods: estimating the quantiles directly and constructing the confidence interval by first estimating the asymptotic variance. The quantile method is very adaptive, allowing for non-Gaussian limit which invalidates all existing methods in two-way clustering with serial correlation. Although the variance method only works under Gaussian limit, it comes with a data-driven bandwidth selection algorithm and a bias-correction under suitable estimators. Monte Carlo simulations demonstrate our methods exhibiting the desired coverage level in the finite sample except when the serial correlation is extremely strong. This paper is the first one that allows for inference on non-Gaussian asymptotics under two-way clustering with serial correlation.
\end{abstract}

\section{Introduction}
Inference on clustering data has always been an important subject and has been extensively studied in the past. In the realm of multiway clustering, \cite{cameron_robust_2011} is probably one of the most cited papers, which proposes multiway-clustering-robust standard errors. \cite{petersen_estimating_2009} explains when firm clustering, time clustering, or two-way clustering is appropriate. \cite{mackinnon_wild_2021} studies the wild bootstrap procedures and theoretical conditions for valid inference with two-way clustered data; \cite{menzel_bootstrap_2021} also proposes using bootstrap method for inference under multiway clustering and gives the corresponding asymptotic theory. However, none of the above studies allows for correlation within at least one dimension of the clusters, which may accompany the multiway clustering problem up to some extent. The multiway clustering is quite common in the panel data, where one dimension is usually firm or individual and the other one is time. Yet, the common time effect is typically serially correlated. Hence, this problem poses a new challenge to the inference method, which requires the researchers to account for both multiway clustering and the potential serial correlation at the same time, as \cite{bertrand_how_2004} has shown the necessity of considering dependence in practical study. \cite{chiang_standard_2024} is probably one of the first papers that takes such problem into consideration. They give a new variance estimator and the corresponding asymptotic theory under a two-way clustering framework, with allowing the serial correlation up to a certain threshold. \cite{chen_fixed-b_2024} obtains a new asymptotic result of the \cite{chiang_standard_2024} estimator (CHS estimator hereafter) under a fixed ratio between the bandwidth and the sample size, which they call it fixed-b, and proposes two bias-corrected variants of the CHS estimator along with the critical values under the fixed-b asymptotics. The most recent study on this issue is probably \cite{hounyo_reliable_2024}, which proposes two multiway wild cluster bootstrap methods based on CHS asymptotic-based approach. They show the wild bootstrap method works both theoretically and practically under a two-way clustering with serial correlation framework.\\
\par
Interestingly, as opposed to bootstrap, using subsampling for inference under two-way clustering with serial correlation remains almost unexplored. The use of subsampling as an inference method can be dated back to \cite{mahalanobis_sample_1946}, which suggests to employ subsamples to estimate variance in crop yields. \cite{carlstein_use_1986} uses non-overlapping subsamples to estimate variance in time series. The leave-many-out jackknife, which is very similar to subsampling, is studied as an approach to estimate variance by \cite{wu_jackknife_1986} and \cite{shao_general_1989}. \cite{politis_subsampling_1999} presents a very complete theory on subsampling, including basic properties of subsampling distribution under i.i.d. case, stationary and non-stationary time series, and various other settings. As subsampling works under both independent data and time series, then it seems promising that it would work under two-way clustering with serial correlation. \cite{menzel_bootstrap_2021} considers using subsampling for inference under two-way clustering where both individual and time effects are i.i.d. draws. Other recent usage of subsampling as an inference method include \cite{chernozhukov_inference_2011} and \cite{kurisu_subsampling_2025}, both of which consider extremal conditional quantile regression. Nevertheless, there is no similar result or theorem that works for the two-way clustering with serial correlation, and this paper aims to fill this gap. Using subsampling for this problem is intuitive, as each proper subsample is in fact a smaller sample and would reflect the true underlying distribution. Then, repeatedly drawing subsamples from the original sample is similar to drawing many samples from the true underlying distribution, which could not be done in reality. Therefore, we can recover some statistics of the true underlying distribution via this procedure.\\
\par
The main contribution of this paper is that it shows the validity of using subsampling methods for inference under two-way clustering with serial correlation. We prove the subsampling distribution converges to the true underlying distribution in probability pointwise under mild conditions; henceforth, its quantiles also converge to their counterparts in the true underlying distribution and can be used to construct confidence intervals with desired levels. Furthermore, such method does not require knowing the form of the limiting distribution but only its existence. The existing literature on this problem is either CLT-based (\cite{chiang_standard_2024}; \cite{chen_fixed-b_2024}) or depends on the normality of t-statistics (\cite{hounyo_reliable_2024}). However, due to a lack of subsample size selection algorithm, using quantiles of the subsample distribution for inference is limited at this stage. With an additional uniform integrability condition, we also develop a variance estimator based on the subsampling distribution, which is consistent to the variance of the true underlying distribution. Such variance estimator can be used to construct confidence intervals or test statistics as well. We further suggest a data-driven method to select subsample size by connecting this subsampling variance estimator to the well-studied spectral density estimator, when the parameter of interest is the population average. A bias-corrected version of this variance estimator that reduces the order of bias is also proposed. The simulation results agree with the theoretical argument and show our methods are comparable, in terms of the inference quality, to the existing leading method, except when the serial correlation is extremely strong.\\
\par
The paper is organized as the following. Section \ref{sec2} talks about the subsampling distribution, and  Section \ref{sec3} is about the variance estimator. The simulation result is presented in Section \ref{sec4}. Appendix \ref{AppendixA} and \ref{AppendixB} give proofs to the main theorems, whereas Appendix \ref{Append_lemma} contains proofs of technical lemmas. The derivations of some results and equalities are in Appendix \ref{Append_detail}, along with other technical details.

\section{Subsampling Distribution in Panel Data}\label{sec2}
\subsection{Quantiles of Subsampling Distribution}
We consider a panel data $\{X_{nt}:1\leq n\leq N,1\leq t\leq T\}$, where $N$ is the sample size of individuals and $T$ the sample size of time periods. We assume each $X_{nt}$ is generated via some Borel-measurable function $f$
\begin{equation}\label{general framework}
    X_{nt}=f(\alpha_{n},\gamma_{t},\varepsilon_{nt})
\end{equation}
with $\{\alpha_{n}\},\{\gamma_{t}\}$, and $\{\varepsilon_{nt}\}$ being mutually independent, $\{\alpha_{n}\}$ being i.i.d. across $n$, $\{\gamma_{t}\}$ being strictly stationary and serially correlated, and $\{\varepsilon_{nt}\}$ being i.i.d. across $(n,t)$. 
\begin{assumption}\label{a1}
    (i) $X_{nt}=f(\alpha_{n},\gamma_{t},\varepsilon_{nt})$ for some Borel measurable function $f$. (ii) $\{\alpha_{n}\},\{\gamma_{t}\}$, and $\{\varepsilon_{nt}\}$ are mutually independent. (iii) $\{\alpha_{n}\}$ is i.i.d. across $n$, and $\{\varepsilon_{nt}\}$ is i.i.d. across $(n,t)$. (iv) $\{\gamma_{t}\}$ is strictly stationary and strong mixing.
\end{assumption}
\par
Let $\theta$ be the parameter of interest. In the main body of this paper, we assume $\theta$ is real-valued, while all the theorems and results can be easily extended to the multivariate case. Define $\hat{\theta}_{NT}=\hat{\theta}_{NT}(\{X_{nt}:1\leq n\leq N,1\leq t\leq T\})$, an estimator of $\theta$ using the full sample. Furthermore, let $\tau_{NT}$ be some normalizing constant and $J_{NT}(x):=\mathbb{P}(\tau_{NT}(\hat{\theta}_{NT}-\theta)\leq x)$ to be the cumulative distribution function of the root $\tau_{NT}(\hat{\theta}_{NT}-\theta)$. Since our goal is to do inference properly on the estimator $\hat{\theta}_{NT}$, then we also need the limiting distribution of the root to exist, which is described by the following assumption.
\begin{assumption}\label{a2}
    There exists some distribution function $J$ such that as $N,T\rightarrow\infty$ $J_{NT}(x)=\mathbb{P}(\tau_{NT}(\hat{\theta}_{NT}-\theta)\leq x)\rightarrow J(x)$ $\forall x\in\mathbb{R}$ at which $J$ is continuous.
\end{assumption}
Subsampling in panel data is not merely a combination of subsampling in i.i.d. data and the one in time series data. In i.i.d. data, we usually pick $b$ indices from $\{1,\dots,N\}$ without replacement and allow overlapping subsamples, and we are able to bound tail probabilities by Hoeffding-Serfling inequality (\cite{serfling_u-statistics_1980}, Theorem A, p.201) due to independence. However, under two-way clustering with serial correlation, none of any pair of observations are independent. Yet, any two observations that do not share the same individual index and far away in time are close to independent due to the strong mixing condition on the time component $\{\gamma_{t}\}$. Therefore, to construct a subsample, we randomly partition $\{1,\dots,N\}$ into $\lceil\frac{N}{b}\rceil$ disjoint subsets with all but one having a size of $b$ and pick $l$ consecutive indices from $\{1,\dots,T\}$ to preserve the serial correlation. For instance, $\{X_{nt}:n\in I_{i},k\leq t\leq k+l-1\}$ is a subsample of individuals from $I_{i}\subset\{1,\dots,N\}$, with $|I_{i}|=b$, and starting from time $k$. In total, there are $\lceil\frac{N}{b}\rceil\cdot(T-l+1)$ possible subsamples. The corresponding subsample estimator $\hat{\theta}_{b,l,i,k}$ is defined as $\hat{\theta}_{b,l,i,k}=\hat{\theta}_{bl}(\{X_{nt}:n\in I_{i},k\leq t\leq k+l-1\})$. The merit of the subsampling is a “tautology": a subsample is in fact a smaller sample. The distribution of $\tau_{bl}(\hat{\theta}_{bl}-\theta)$ is exactly $J_{bl}$, and it has the same limiting distribution as the root $\tau_{NT}(\hat{\theta}_{NT}-\theta)$ as long as we think $b$ and $l$ as the smaller versions of $N$ and $T$ respectively. Not only both $b$ and $l$ must go to infinity, but also their relation has to be the same as the relation between $N$ and $T$ in the limit. For example, if $N$ goes to infinity faster than $T$, then $b\ll l$ would result in $J_{bl}$ and $J_{NT}$ having different limits. Another way to think this is to consider two sequences $N({m}):\mathbb{N}\rightarrow\mathbb{N}$ and $T({m}):\mathbb{N}\rightarrow\mathbb{N}$. $N$ and $b$ are both from the sequence $(N(m))$, and $T$ and $l$ are both from the sequence $(T(m))$. But for the $m_{1}$ and $m_{2}$ such that $N=N(m_{1})$, $T=T(m_{1})$ and $b=N(m_{2})$, $l=T(m_{2})$, $m_{1}>m_{2}$. With these, repeatedly drawing subsamples from the distribution $J_{bl}$ and evaluating the empirical distribution should somehow reflect the true underlying distribution $J_{NT}$ which eventually goes to $J$, the limit associated to our parameter of interest $\theta$. These lead to our first main theorem.\\
\par
Define the subsampling distribution as
\begin{equation}\label{empiricalL}
    L_{N,T,b,l}(x)=\frac{1}{N_{b}\cdot q}\sum_{i=1}^{N_{b}}\sum_{k=1}^{q}\mathbbm{1}\{\tau_{bl}(\hat{\theta}_{b,l,i,k}-\hat{\theta}_{NT})\leq x\}
\end{equation}
where $N_{b}=\lceil\frac{N}{b}\rceil$ and $q=T-l+1$.
\begin{assumption}\label{a3}
    $\frac{\tau_{bl}}{\tau_{NT}},\frac{b}{N},\frac{l}{T}, \frac{N}{T}-\frac{b}{l}\rightarrow0$, and $b,l\rightarrow\infty$ as $N,T\rightarrow\infty$.
\end{assumption}

\begin{theorem}\label{thm1}
    Under Assumption \ref{a1}, \ref{a2}, and \ref{a3}
    \begin{enumerate}[label=(\alph*)]
        \item If $J$ is continuous at $x$, then $L_{N,T,b,l}(x)\rightarrow_{p}J(x)$.
        \item If $J$ is continuous, then $\sup_{x}|L_{N,T,b,l}(x)-J(x)|\rightarrow_{p}0$.
        \item If $J$ is continuous at $\inf\{x:J(x)\geq1-\alpha\}$, then $\mathbb{P}(\tau_{NT}(\hat{\theta}_{NT}-\theta)\leq c_{b,l}^{L}(1-\alpha))\rightarrow1-\alpha$ $\forall\alpha\in(0,1)$, where $c_{b,l}^{L}(1-\alpha)=\inf\{x:L_{N,T,b,l}(x)\geq1-\alpha\}$
    \end{enumerate}
\end{theorem}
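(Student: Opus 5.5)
The plan follows the standard Politis--Romano--Wolf subsampling argument: replace the centering $\hat{\theta}_{NT}$ by $\theta$, then show that the resulting U-type average concentrates around its mean. Define the infeasible subsampling distribution
\begin{equation*}
U_{N,T,b,l}(x)=\frac{1}{N_{b}q}\sum_{i=1}^{N_{b}}\sum_{k=1}^{q}\mathbbm{1}\{\tau_{bl}(\hat{\theta}_{b,l,i,k}-\theta)\leq x\}.
\end{equation*}

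\textbf{Step 1: the mean of $U$.} Each summand has expectation $J_{bl}(x)$, up to the single block of size smaller than $b$, whose weight is $O(1/N_b)\to0$. By Assumption \ref{a3}, $(b,l)$ mimics $(N,T)$, so Assumption \ref{a2} gives $J_{bl}(x)\to J(x)$ at continuity points. Hence $\mathbb{E}U_{N,T,b,l}(x)\to J(x)$.

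\textbf{Step 2: the variance of $U$.} This is the main obstacle. I would show $\Var(U_{N,T,b,l}(x))\to0$ by conditioning on the random partition $\{I_i\}$ and splitting the double sum of covariances between indicators indexed by $(i,k)$ and $(i',k')$. There are three kinds of pairs.
\begin{enumerate}[label=(\roman*)]
\item Pairs with $i=i'$. There are $N_b q^2$ of these, each with covariance bounded by $1$, so they contribute $O(1/N_b)=O(b/N)\to0$.
\item Pairs with $i\neq i'$ and $|k-k'|\geq l+m$ for a slowly growing gap $m$. Here the two subsamples share no $\alpha_n$ and no $\varepsilon_{nt}$. Conditionally they depend on $\{\gamma_t\}$ only through blocks separated by at least $m$, so by the strong mixing inequality for bounded variables the covariance is at most $4\alpha_\gamma(m)$. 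This uses the mutual independence in Assumption \ref{a1}: condition on $\alpha$ and $\varepsilon$, or note that the functions are measurable with respect to independent $\sigma$-fields joined with separated $\gamma$-blocks.
\item Pairs with $i\neq i'$ and $|k-k'|<l+m$. Their proportion among all pairs is $O((l+m)/q)=O(l/T)\to0$.
\end{enumerate}
Summing, the variance is $O(b/N+l/T+m/T+\alpha_\gamma(m))\to0$ once $m\to\infty$ with $m/T\to0$. Chebyshev's inequality then gives $U_{N,T,b,l}(x)\to_p J(x)$. The delicate point is justifying (ii). Disjoint individual sets do not make the subsamples independent, because of the common $\gamma$. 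Independence across the $\alpha$- and $\varepsilon$-parts holds, and only the $\gamma$-blocks remain dependent, through mixing. I would make this rigorous by writing each indicator as $g(\alpha_{I_i},\varepsilon_{I_i,\cdot},\gamma_{k:k+l-1})$. The key step is that strong mixing of $\{\gamma_t\}$ extends to the enlarged process $(\gamma_t,\alpha,\varepsilon_{\cdot t})$ restricted to disjoint index sets. This holds because of independence: the coefficient for the product structure is bounded by that of $\gamma$.

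\textbf{Step 3: the centering.} Write $\tau_{bl}(\hat{\theta}_{b,l,i,k}-\hat{\theta}_{NT})=\tau_{bl}(\hat{\theta}_{b,l,i,k}-\theta)-\tau_{bl}(\hat{\theta}_{NT}-\theta)$. The second term equals $(\tau_{bl}/\tau_{NT})O_p(1)=o_p(1)$ by Assumptions \ref{a2} and \ref{a3}. On the event $\{|\tau_{bl}(\hat{\theta}_{NT}-\theta)|\le\epsilon\}$, whose probability tends to one, we have $U(x-\epsilon)\le L(x)\le U(x+\epsilon)$. Choose $\epsilon$ so that $x\pm\epsilon$ are continuity points of $J$ and $J(x\pm\epsilon)$ is within $\delta$ of $J(x)$. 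This gives (a).

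\textbf{Steps 4 and 5: parts (b) and (c).} Part (b) follows from (a) by the Pólya-type argument: take a finite grid of continuity points, and use monotonicity of $L$ and $J$ together with uniform continuity of a continuous distribution function. For (c), convergence of $L$ to $J$ at continuity points implies $c^{L}_{b,l}(1-\alpha)\to_p c(1-\alpha)=\inf\{x:J(x)\ge1-\alpha\}$, using continuity of $J$ at that point; the standard quantile-convergence lemma applies here. Finally, $\tau_{NT}(\hat{\theta}_{NT}-\theta)-c^{L}_{b,l}(1-\alpha)$ converges in distribution to $J$ shifted by $c(1-\alpha)$ by Slutsky's lemma. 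Evaluating at $0$, a continuity point, gives the limit $J(c(1-\alpha))=1-\alpha$. If $J$ has a flat region at level $1-\alpha$, the equality $J(c(1-\alpha))=1-\alpha$ still holds by continuity at $c$.
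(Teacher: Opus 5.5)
Your argument follows the paper's proof step for step. It uses the infeasible $U_{N,T,b,l}$ centred at $\theta$, gets $\mathbb{E}[U_{N,T,b,l}(x)]\to J(x)$ through $J_{bl}\to J$, and gets $\Var(U_{N,T,b,l}(x))\to0$ by splitting covariances and bounding the time-separated ones with a mixing inequality that exploits independence of the $\alpha$- and $\varepsilon$-parts (Lemma~\ref{le1}, via Bradley's subadditivity). It then uses the recentering sandwich, a P\'olya grid for (b), and quantile convergence for (c).

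Two of your deviations are real improvements over the paper:
\begin{itemize}
\item The paper applies Lemma~\ref{le1} to all pairs with time gap at least $l$, including $i=j$. Its proof needs $\alpha(\mathcal{A}_i,\mathcal{A}_j)=0$, which fails when $i=j$: for the sample mean with $X_{nt}=\alpha_n$, $E_{i,k}$ does not depend on $k$. Your separate $O(1/N_b)$ bound for same-block pairs is what is actually required. It is also exactly where $b/N\to0$ enters, a condition the paper's variance bound never uses.
\item Your Slutsky step in (c) is correct. The paper instead replaces $\mathbb{P}(\tau_{NT}(\hat\theta_{NT}-\theta)\le c^{L}_{b,l}(1-\alpha))$ by $J_{NT}$ evaluated at the random point $c^{L}_{b,l}(1-\alpha)$, which is not justified.
\end{itemize}

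The one step that does not go through as written is $c^{L}_{b,l}(1-\alpha)\to_p c(1-\alpha)$ under mere continuity of $J$ at $c(1-\alpha)$.
\begin{itemize}
\item The lower half is fine: with probability tending to one, $c^{L}_{b,l}(1-\alpha)\ge x_1$ for continuity points $x_1<c(1-\alpha)$.
\item The upper half needs $J(x)>1-\alpha$ for all $x>c(1-\alpha)$. Suppose $J\equiv1-\alpha$ on $[c(1-\alpha),c')$. Then $L_{N,T,b,l}\to_p 1-\alpha$ on that stretch and may sit just below $1-\alpha$ with non-vanishing probability, so the subsampling quantile can drift up to $c'$.
\item Your flat-region remark only secures $J(c(1-\alpha))=1-\alpha$, which was never the difficulty. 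The paper makes the same implicit strict-monotonicity assumption when it asserts $L_{N,T,b,l}(c_{1-\alpha}+\varepsilon)>1-\alpha$ with probability tending to one.
\end{itemize}

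To repair it, either assume $J$ is strictly increasing at $c(1-\alpha)$, or sandwich directly without quantile convergence. Let $R_{NT}=\tau_{NT}(\hat\theta_{NT}-\theta)$ and $c'=\sup\{x:J(x)\le1-\alpha\}$. For continuity points $x_1<c(1-\alpha)$ and $x_2>c'$, you have $x_1\le c^{L}_{b,l}(1-\alpha)\le x_2$ with probability tending to one. Hence
\[
J(x_1)\le\liminf\mathbb{P}(R_{NT}\le c^{L}_{b,l}(1-\alpha))\le\limsup\mathbb{P}(R_{NT}\le c^{L}_{b,l}(1-\alpha))\le J(x_2).
\]
Letting $x_1\uparrow c(1-\alpha)$ and $x_2\downarrow c'$ gives the limit $1-\alpha$, provided $J$ has no atom at $c'$. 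This is automatic when $c'=c(1-\alpha)$.
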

The proof is in the Appendix \ref{AppendixA}. Part (a) provides a basic pointwise convergence in probability result which is used to derive the convergence of quantile in part (c). In part (b), the pointwise convergence in probability is strengthened to an uniform convergence in probability if the limiting distribution is continuous. Part (c) provides a valid inference method using subsampling: we can use the quantile from the empirical distribution $L_{N,T,b,l}(x)$ to construct a confidence interval, of which coverage probability goes to the nominal level. The one-sided confidence interval $(\hat{\theta}_{NT}-\frac{c_{b,l}^{L}(1-\alpha)}{\tau_{NT}},\infty)$ has an asymptotic coverage probability being the nominal level $1-\alpha$. To construct a two-sided equal-tail confidence interval with a correct asymptotic coverage probability, one should use $(\hat{\theta}_{NT}-\frac{c_{b,l}^{L}(1-\frac{\alpha}{2})}{\tau_{NT}},\hat{\theta}_{NT}-\frac{c_{b,l}^{L}(\frac{\alpha}{2})}{\tau_{NT}})$, where left and right endpoints are interchanged like the bootstrap percentile-t interval. This subsampling quantile method does not require any specific form of the limiting distribution, whereas the existing literature on two-way clustering with serial correlation are built on the Gaussianity of the asymptotic $J$. This allows us to apply subsampling to some non-standard case. 
\begin{example}[Non-separable Heterogeneity]
    This is Example 1.7 in \cite{menzel_bootstrap_2021}. Suppose
    \begin{equation*}
        X_{nt}=\alpha_{n}\gamma_{t}+\varepsilon_{nt}
    \end{equation*}
    with $\mathbb{E}[\alpha_{n}]=\mathbb{E}[\gamma_{t}]=\mathbb{E}[\varepsilon_{nt}]=0$. Beyond Assumption \ref{a1}, if we further assume $\mathbb{E}[|X_{nt}|^{2+\delta}]<\infty$ for some $\delta>0$ and $\sum_{m=0}^{\infty}\alpha_{\gamma}(m)^{1-\frac{2}{2+\delta}}<\infty$, then $\frac{1}{\sqrt{NT}}\sum_{n=1}^{N}\sum_{t=1}^{T}X_{nt}\rightarrow_{d}\sigma_{\alpha}\sigma_{\gamma}Z_{1}Z_{2}+\sigma_{\varepsilon}Z_{3}$ for some $\sigma_{\alpha},\sigma_{\gamma},\sigma_{\varepsilon}$ and mutually independent standard normal random variables $Z_{1},Z_{2},Z_{3}$. The limiting distribution is not normal but a summation of two scaled chi-squared distributions and a normal distribution. Furthermore, the convergence rate is $\sqrt{NT}$ rather than $\sqrt{N}$ or $\sqrt{T}$, which indicates it is also a degenerate case.
\end{example}
Although the knowledge of the exact form of $J$ is not required, knowing the normalizing constant $\tau_{NT}$ is still necessary to construct valid confidence interval. That is, such method works in the degenerate case when it is known to be degenerate.

\section{Variance Estimation}\label{sec3}
\subsection{Subsampling Variance Estimator}
In some cases, we can also do inference on $\hat{\theta}_{NT}$ properly by estimating its asymptotic variance. For example, if the limiting distribution is Gaussian and has an unknown variance $V$, then $\hat{\theta}_{NT}$ approximately follows the distribution of $N(\theta,\frac{V}{\tau_{NT}^{2}})$. Furthermore, whenever we want to evaluate the efficiency or derive the mean squared-error of the estimator $\hat{\theta}_{NT}$, a variance estimator is usually required. We here provide a subsampling method to estimate the variance consistently by using the variance associate to the empirical distribution $L_{N,T,b,l}$ defined in (\ref{empiricalL}).
\begin{theorem}\label{thm2}
    Let $\lim\Var(\tau_{NT}\hat{\theta}_{NT})=V<\infty$. Under Assumption \ref{a1} and \ref{a3}, if $\{\tau_{NT}^{4}(\hat{\theta}_{NT}-\mathbb{E}[\hat{\theta}_{NT}])^{4}\}$ is uniformly integrable, then
    \begin{equation}\label{sighat}
    \hat{\sigma}_{N,T,b,l}^{2}=\frac{\tau_{bl}^{2}}{N_{b}\cdot q}\sum_{i=1}^{N_{b}}\sum_{k=1}^{q}(\hat{\theta}_{b,l,i,k}-\Bar{\theta}_{N,T,b,l})^{2}\rightarrow_{L^{2}}V
    \end{equation}
    where $\Bar{\theta}_{N,T,b,l}=\frac{1}{N_{b}\cdot q}\sum_{i=1}^{N_{b}}\sum_{k=1}^{q}\hat{\theta}_{b,l,i,k}$.
\end{theorem}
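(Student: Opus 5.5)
Write $Y_{i,k}=\tau_{bl}(\hat{\theta}_{b,l,i,k}-\mathbb{E}[\hat{\theta}_{bl}])$. By stationarity of $\{\gamma_t\}$ and exchangeability of the $\alpha_n,\varepsilon_{nt}$ (Assumption \ref{a1}), each subsample has the same law as a full sample of size $(b,l)$, so every $Y_{i,k}$ is distributed as $\tau_{bl}(\hat{\theta}_{bl}-\mathbb{E}[\hat{\theta}_{bl}])$. The centering cancels in the sample variance, which gives
\begin{equation*}
\hat{\sigma}^{2}_{N,T,b,l}=\frac{1}{N_b q}\sum_{i,k}Y_{i,k}^{2}-\bar{Y}^{2},\qquad \bar{Y}=\frac{1}{N_b q}\sum_{i,k}Y_{i,k}.
\end{equation*}
I will prove three facts: (i) $\mathbb{E}[Y_{i,k}^2]=\Var(\tau_{bl}\hat{\theta}_{bl})\to V$; (ii) $\frac{1}{N_bq}\sum_{i,k}Y_{i,k}^2-\mathbb{E}[Y^2_{1,1}]\to_{L^2}0$; (iii) $\bar{Y}^2\to_{L^2}0$. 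Minkowski's inequality then gives $L^2$ convergence of $\hat{\sigma}^2$ to $V$. Fact (i) is immediate, since $b,l\to\infty$ with $b/l$ tracking $N/T$ (Assumption \ref{a3}), so $(b,l)$ is just a point further back along the same sequence. I will treat the single subset of size smaller than $b$ either by assuming $b$ divides $N$ or by bounding its contribution, which is $O(1/N_b)$ with uniformly bounded moments.

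For (ii) and (iii), I bound the variances of these averages by expanding into pairwise covariances $\Cov(Y_{i,k}^2,Y_{i',k'}^2)$ and $\Cov(Y_{i,k},Y_{i',k'})$. Uniform integrability of the fourth powers gives $\sup\mathbb{E}[Y^4]<\infty$, so every covariance is uniformly bounded. There are two kinds of pairs.

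\emph{Pairs with $i=i'$.} There are $N_b q^2$ such pairs out of $N_b^2q^2$, a fraction $1/N_b\to0$, so together they contribute $O(1/N_b)$.

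\emph{Pairs with $i\neq i'$.} These subsamples use disjoint individuals, so their $\alpha$ and $\varepsilon$ inputs are independent and they are linked only through $\gamma$ on the time windows $[k,k+l-1]$ and $[k',k'+l-1]$.
\begin{itemize}
\item When $|k-k'|\le l+m$ for a lag $m=m_T$ with $m/T\to0$ chosen below, there are $O(q(l+m))$ such time pairs out of $q^2$. Their share is $O((l+m)/T)\to0$ by $l/T\to0$.
\item When $|k-k'|>l+m$, the windows are separated by more than $m$ periods. I condition on the $\gamma$'s: given $\gamma$, the two estimators are independent, because they involve disjoint $\alpha$'s and $\varepsilon$'s. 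Each covariance therefore reduces to the covariance of two bounded-moment functions of $\gamma$-blocks that are $m$ apart. Davydov's covariance inequality for strong mixing sequences, applied with the fourth-moment bound, gives $|\Cov|\le C\,\alpha_\gamma(m)^{\delta'}$ for some $\delta'>0$. For the squares this needs moments slightly beyond the fourth. If those are not available, I truncate $Y^2$ at level $M$, which makes the truncated part bounded so Davydov applies with exponent 1, and control the remainder uniformly in $M$ by uniform integrability of $Y^4$. Letting $m\to\infty$ makes this part vanish.
\end{itemize}

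For (iii), note that $\mathbb{E}[Y]=0$, so the same decomposition gives $\Var(\bar Y)\to0$ and hence $\bar Y\to_{L^2}0$. Moving from $\bar Y\to0$ to $\bar Y^2\to0$ in $L^2$ requires $\mathbb{E}[\bar Y^4]\to0$. This follows by interpolation: $\bar Y$ is bounded in $L^4$, because $\|\bar Y\|_4\le\sup\|Y\|_4$ by Minkowski, and combining this with $L^2$ convergence gives convergence in $L^p$ for all $p<4$. To get $p=4$ itself I use Vitali: $\bar Y^4\le\frac{1}{N_bq}\sum Y_{i,k}^4$ by Jensen, and averages of a uniformly integrable family are uniformly integrable.

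The main obstacle is the mixing step in (ii), where the truncation and uniform integrability have to be combined so that Davydov's bound applies to the squared terms with only fourth-moment uniform integrability. The choice $m\to\infty$ with $m/T\to0$ is what balances the near-pair count against the mixing decay.
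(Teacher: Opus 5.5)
Your proposal is correct and follows essentially the paper's route. It uses the split $\hat{\sigma}^{2}_{N,T,b,l}=\hat{S}_{N,T}-\hat{R}_{N,T}^{2}$, then $\mathbb{E}[\hat{S}_{N,T}]\to V$ plus $\Var(\hat{S}_{N,T})\to0$ via a near/far split in time, with truncation and fourth-moment uniform integrability handling the far pairs (the paper's Lemma~\ref{le3}), and finally $\hat{R}_{N,T}\to0$ in $L^{4}$ from $\Var(\hat{R}_{N,T})\to0$ together with uniform integrability of $\hat{R}_{N,T}^{4}$ obtained by Jensen-type domination.

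Your separate $O(1/N_b)$ treatment of pairs with $i=i'$ is actually needed: the paper applies the Lemma~\ref{le1} bound to all $i,j$, but it fails for $i=j$, where both subsamples share the same $\alpha_n$'s. Your fallback for the short leftover index set, however, is not valid in general. For the sample mean with a leftover set of size $b'$, its scaled second moment is of order $b/b'$, so its $1/N_b$ weight leaves a term of order $b^{2}/(b'N)$, which Assumption~\ref{a3} does not force to zero. Simply assume $b\mid N$ or discard that set, as the paper implicitly does.
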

The proof is in Appendix \ref{AppendixB}. Note that $\hat{\sigma}_{N,T,b,l}^{2}$ estimates $\Var(\tau_{NT}\hat{\theta}_{NT})$. To estimate the variance of $\hat{\theta}_{NT}$, we simply divide $\hat{\sigma}_{N,T,b,l}^{2}$ by $\tau_{NT}^{2}$. The condition $\frac{\tau_{bl}}{\tau_{NT}}$, which lies in Assumption \ref{a3}, is in fact redundant. Throughout the proof, the only two requirements put on the normalizing sequence $\{\tau_{NT}$ are $\Var(\tau_{NT}\hat{\theta}_{NT})$ converging to some finite $V$ and the fourth order uniform integrability on $\{\tau_{NT}(\hat{\theta}_{NT}-\mathbb{E}[\hat{\theta}_{NT}])\}$. Hence, if $\hat{\theta}_{NT}$ is consistent to some deterministic $\theta$, then $\hat{\sigma}_{N,T,b,l}^{2}\rightarrow_{p}0$. Although Theorem \ref{thm2} does not explicitly rely on Assumption \ref{a2}, knowing which family of distribution $\lim\tau_{NT}(\hat{\theta}_{NT}-\theta)$ belongs to is required if we want to construct confidence intervals using this variance estimator.

\subsection{Choice of Subsample Size}
Up to this point, asymptotic theory only requires $b=o(N)$ and $l=o(T)$ but does not tell us which exact number to pick. Selecting the subsample size is quite important as it would largely affect the result of our variance estimator just as the bandwidth selection in many nonparametric estimation problems. Traditionally, researchers used some sort of mean squared-error to determine the optimal bandwidth choice. In this paper, we aim to minimize the asymptotic mean squared-error (AMSE) of $\hat{\sigma}_{N,T,b,l}^{2}$, which is defined as
\begin{equation*}
    AMSE(\hat{\sigma}_{N,T,b,l}^{2},V)=Bias_{\infty}^{2}(\hat{\sigma}_{N,T,b,l}^{2},V)+Var_{\infty}(\hat{\sigma}_{N,T,b,l}^{2})
\end{equation*}
where $Bias_{\infty}(\hat{\sigma}_{N,T,b,l}^{2},V)$ and $\Var_{\infty}(\hat{\sigma}_{N,T,b,l}^{2})$ are defined as the leading terms of $Bias(\hat{\sigma}_{N,T,b,l}^{2},V)$ and $\Var(\hat{\sigma}_{N,T,b,l}^{2})$ respectively. However, either $Bias(\hat{\sigma}_{N,T,b,l}^{2},V)$ or $\Var(\hat{\sigma}_{N,T,b,l}^{2})$ depends on the specific form of $\hat{\theta}_{NT}$. Therefore, we focus on mean estimation throughout the rest of this section (and the next section).\\
\par
Under the data generating process of (\ref{general framework}) and Assumption \ref{a1}, we are interested in $\theta=\mathbb{E}[X_{nt}]$, and we assume $\theta=0$ without loss of generality. Let $\hat{\theta}_{NT}=\frac{1}{NT}\sum_{n=1}^{N}\sum_{t=1}^{T}X_{nt}$ be an estimator to $\theta$. Furthermore, we use a projection technique to transform $X_{nt}$ into a linear combination of some random variables. Define $a_{n}=\mathbb{E}[X_{nt}|\alpha_{n}]$, $b_{t}=\mathbb{E}[X_{nt}|\gamma_{t}]$, and $e_{nt}=X_{nt}-a_{n}-b_{t}$. Then
\begin{equation}\label{projection}
    X_{nt}=a_{n}+b_{t}+e_{nt}
\end{equation}
The random variables $\{a_{n}\}$, $\{b_{t}\}$, and $\{e_{nt}\}$ satisfy the following: (i) $\{a_{n}\}$ is i.i.d., and $\{b_{t}\}$ is strictly stationary; (ii) $\{e_{nt}\}$ is identically distributed across $(n,t)$; (iii) $\mathbb{E}[a_{n}]=\mathbb{E}[b_{t}]=\mathbb{E}[e_{nt}]=0$; (iv) $\{a_{n}\}$ and $\{b_{t}\}$ are independent, and $\{a_{n}\}$, $\{b_{t}\}$, and $\{e_{nt}\}$ are mutually uncorrelated; (v) $e_{nt}$ and $e_{mp}$ are independent conditional on $(\gamma_{t},\gamma_{p})$ for any $n\neq m$ and any $t,p\in\mathbb{N}$. Furthermore, define the follow variance and autocovariances
\begin{align*}
    V_{a}&=\Var(a_{1})\\
    R_{b}(k)&=\Cov(b_{1},b_{1+k})\quad V_{b}=R_{b}(0)\\
    R_{e}(k)&=\Cov(e_{11},e_{1,1+k})\quad V_{e}=R_{e}(0)\\
\end{align*}
In \cite{chiang_standard_2024}, the authors have established the asymptotic theory for the mean estimator $\hat{\theta}_{NT}$ under the framework (\ref{projection}) and the assumptions below. 
\begin{assumption}\label{a4}
    There exists some $r>1$ and $\delta>0$ such that (i) $\mathbb{E}[|X|^{4(r+\delta)}]<\infty$ and (ii) $\{\gamma_{t}\}$ is strong mixing with size $\frac{2r}{r-1}$.
\end{assumption}
\begin{assumption}\label{a5}
    $\frac{N}{T}\rightarrow c\in(0,\infty)$ as $N,T\rightarrow\infty$.
\end{assumption}
Then, under Assumption \ref{a1}, \ref{a4}, and \ref{a5},
    \begin{equation*}
        \sqrt{N}(\hat{\theta}_{NT}-\theta)\rightarrow_{d}N(0,V)
    \end{equation*}
where $V=V_{a}+c\sum_{k=-\infty}^{\infty}R_{b}(k)<\infty$ (see the derivation of $V$ in Appendix \ref{Append_detail1}).
\par
Now, we are ready to introduce a data-driven approach for selecting the subsample size in the context of mean estimation, which has a close relation to the bandwidth choice in the classical spectral density estimation. To see this relationship, note that the limiting variance $V$ can also be written as
\begin{equation*}
    V=V_{a}+c\cdot2\pi SP_{b}(0)
\end{equation*}
where $SP_{b}(\lambda)\equiv\frac{1}{2\pi}\sum_{k=-\infty}^{\infty}R_{b}(k)e^{-ik\lambda}$ is the spectral density function. Meanwhile, as the subsampling variance estimator $\hat{\sigma}_{N,T,b,l}^{2}$ is consistent and has the following expectation
\begin{equation*}
    \mathbb{E}[\hat{\sigma}_{N,T,b,l}^{2}]=V_{a}+\frac{b}{l}\sum_{k=-l+1}^{l-1}(1-\frac{|k|}{l})R_{b}(k)+\frac{1}{l}\sum_{k=-l+1}^{l-1}(1-\frac{|k|}{l})R_{e}(k)+O(\frac{b}{N})+O(\frac{bl}{N^{2}})
\end{equation*}
then we have the approximation
\begin{equation}\label{eqapprox}
    \hat{\sigma}_{N,T,b,l}^{2}\simeq V_{a}+\frac{b}{l}\sum_{k=-l+1}^{l-1}(1-\frac{|k|}{l})R_{b}(k)+\frac{1}{l}\sum_{k=-l+1}^{l-1}(1-\frac{|k|}{l})R_{e}(k)
\end{equation}
Assumption \ref{a3} says in order for $\hat{\sigma}_{N,T,b,l}^{2}$ to be a consistent estimator of V, $\frac{N}{T}-\frac{b}{l}\rightarrow0$. Then, we may assume the ratio between $b$ and $l$ is fixed and $b=\frac{N}{T}\cdot l$. Hence, there is only one parameter to choose and (\ref{eqapprox}) becomes
\begin{equation}\label{eqapprox2}
    \hat{\sigma}_{N,T,b,l}^{2}\simeq V_{a}+\frac{N}{T}\sum_{k=-l+1}^{l-1}(1-\frac{|k|}{l})R_{b}(k)+\frac{1}{l}\sum_{k=-l+1}^{l-1}(1-\frac{|k|}{l})R_{e}(k)
\end{equation}
This expression connects the subsampling variance estimator to the spectrum estimation at frequency zero using the Bartlett window. Not surprisingly, the subsample size $l$ is closely related to the Bartlett window's width; in particular, $l$ is exactly the inverse of the Bartlett window's width if we also treat $\frac{1}{l}\sum_{k=-l+1}^{l-1}(1-\frac{|k|}{l})R_{e}(k)$ negligible. At this stage, we treat this part negligible, while it would be valuable in the future if an algorithm that does not neglect such part is devised. To find the optimal subsample size, We will employ a two-step procedure to choose $l$. First, we try to identify and estimate $V_{b}$, $V_{e}$, $R_{b}$, and $R_{e}$ by some consistent estimators $\hat{R}_{b}$ and $\hat{R}_{e}$ respectively. Then, we plug in those consistent autocovariance estimators and compute $l_{opt}$ minimizing the asymptotic mean squared error. It is important to note that $\hat{R}_{b}$ and $\hat{R}_{e}$, as well as $\hat{V}_{a}$, can also be used to estimate $V$. However, we here only use them to compute the optimal subsample size. \\
\par
If we treat $\frac{\hat{V}_{e}}{l}+\frac{2}{l}\sum_{k=1}^{l-1}(1-\frac{k}{l})\hat{R}_{e}(k)$ negligible, then
\begin{equation}\label{simpleapprox}
    \hat{\sigma}_{N,T,b,l}^{2}\simeq V_{a}+\frac{N}{T}\sum_{k=-l+1}^{l-1}(1-\frac{|k|}{l})R_{b}(k)
\end{equation}
and the problem boils down to the optimal bandwidth choice of the Bartlett kernel, which has been widely studied in the past. For instance, \cite{andrews_heteroskedasticity_1991} obtained the optimal bandwidths for a class of kernels by using an asymptotic truncated mean squared error criterion; \cite{newey_automatic_1994} proposed a plug-in procedure for selecting the bandwidth. In this paper, we adopt an iterative plug-in method developed by \cite{buhlmann_locally_1996}, which was also used by \cite{buhlmann_block_1999} to select the block length of the moving blocks bootstrap in time series. To see how it connects to the optimal bandwidth choice of the Bartlett kernel, recall that $V=V_{a}+c\cdot2\pi SP_{b}(0)$. $V_{a}$ here can be thought as a constant, as a typical estimator $\hat{V}_{a}$ usually does not depend on the bandwidth parameter. Given the limiting ratio $c=\lim\frac{N}{T}$ cannot be inferred anyway, our goal becomes to select an $l$ that minimizes $AMSE(\sum_{k=-l+1}^{l-1}(1-\frac{|k|}{l})\hat{R}_{b}(k),\sum_{k=-\infty}^{\infty}R_{b}(k))$, or equivalently to select a $w$ that minimizes
\begin{equation}\label{bartlett}
    AMSE(\sum_{k=-T+1}^{T-1}W_{B}(kw)\hat{R}_{b}(k),\sum_{k=-\infty}^{\infty}R_{b}(k))
\end{equation}
where $W_{B}(kw)=\max\{0,1-|kw|\}$ is the Bartlett window, with the bandwidth $w$ being the inverse of $l$. The optimal $w$ that minimizes the problem (\ref{bartlett}) is 
\begin{equation*}
    w^{*}=[\frac{2(\sum_{k=-\infty}^{\infty}R_{b}(k))^{2}}{3(\sum_{k=-\infty}^{\infty}|k|R_{b}(k))^{2}}]^{\frac{1}{3}}T^{-\frac{1}{3}}
\end{equation*}
To estimate $w^{*}$, one usually needs to use other types of kernels to estimate $\sum_{k=-\infty}^{\infty}R_{b}(k)$ and $\sum_{k=-\infty}^{\infty}|k|R_{b}(k)$.\\
\par
Our two-step procedure will then be estimating $R_{b}(k)$ by $\hat{R}_{b}(k)$ and applying B\"uhlmann's iterative scheme to find the optimal subsample size $l_{opt}$. First, to estimate $R_{b}(k)=\mathbb{E}[b_{t}b_{t+k}]$, by the properties of $\{a_{n}\}$, $\{b_{t}\}$, and $\{e_{nt}\}$, for any $n\neq m$ and $k\in\{0\}\cup\mathbb{N}$, we have
\begin{equation*}
    \mathbb{E}[X_{nt}X_{m,t+k}]=\mathbb{E}[b_{t}b_{t+k}]\\
\end{equation*}
Then, the estimator $\hat{R}_{b}(k)=\frac{1}{N(N-1)T}\sum_{n\neq m}\sum_{t=1}^{T-k}X_{nt}X_{m,t+k}$ consistently estimates $\mathbb{E}[X_{nt}X_{m,t+k}]$. We divide the time sum by $T$ instead of $T-k$ in order to decrease the volatility. When $k$ is close to $T$, there are fewer available data points for $X_{nt}X_{m,t+k}$. Hence, the quantity is very sensitive to the sample realization. By dividing $T$ instead of $T-k$, the influence of $X_{nt}X_{m,t+k}$ is decreased when $k$ is large.
The iterative steps are as follows
\begin{align}\label{lopt}
    w_{0}&=T^{-1}\notag\\
    w_{i}&=[\frac{\sum_{k=-T+1}^{T-1}\hat{R}_{b}^{2}(k)}{6\sum_{k=-T+1}^{T-1}W_{SC}^{2}(kw_{i-1}T^{4/21})k^{2}\hat{R}_{b}^{2}(k)})]^{\frac{1}{3}}T^{-\frac{1}{3}}\quad \text{for }i=\{1,\dots,L\}\notag\\
    w_{opt}&=[\frac{2(\sum_{k=-T+1}^{T-1}W_{TH}(kw_{L}T^{4/21})\hat{R}_{b}(k))^{2}}{3(\sum_{k=-T+1}^{T-1}W_{SC}(kw_{L}T^{4/21})|k|\hat{R}_{b}(k))^{2}}]^{\frac{1}{3}}T^{-\frac{1}{3}}\notag\\
    l_{opt}&=\lfloor w_{opt}^{-1}\rceil\quad\text{closest integer to }w_{opt}^{-1}
\end{align}
where
\begin{align*}
W_{TH}(x)&=
\begin{cases}
    \frac{1+\cos(\pi x)}{2} & |x|\leq1\\
    0 & |x|>1
\end{cases}\\
\\
W_{SC}(x)&=
\begin{cases}
    1 & |x|>\frac{4}{5}\\
    \frac{1+\cos(\pi(5x-4))}{2} & \frac{4}{5}\leq |x|\leq 1\\
    0 & |x|>1
\end{cases}
\end{align*}
The number $L$ is the number of iterations of the “global steps". According to \cite{buhlmann_locally_1996}, $L$ needs to be no less than 4 in order for the optimal bandwidth $w_{opt}$ to have the correct asymptotic order. In the later simulation section, we set $L=20$ to guarantee a convergence.\\
\par
The optimality of such algorithm depends on a set of assumptions. First, we need (\ref{simpleapprox}) to be a good approximation with estimating $R_{b}(k)$ by $\hat{R}_{b}(k)=\frac{1}{N(N-1)T}\sum_{n\neq m}\sum_{t=1}^{T-k}X_{nt}X_{m,t+k}$, namely,
\begin{equation*}
    \frac{\mathbb{E}[(\hat{\sigma}_{N,T,b,l}^{2}-V)^{2}]}{\mathbb{E}[(V_{a}+\frac{N}{T}\sum_{k=-l+1}^{l-1}(1-\frac{|k|}{l})\hat{R}_{b}(k)-V)^{2}]}\rightarrow1
\end{equation*}
as $N,T\rightarrow\infty$. The spectral density at frequency 0 also has to be positive, $SP_{b}(0)>0$. Furthermore, we need $\sum_{k=0}^{\infty}(k+1)^{4}|R_{b}(k)|<\infty$, and the cumulants of $\{b_{t}\}$ with order $h\leq8$ are summable, 
\begin{equation*}
    \sum_{{t_{1},\dots,t_{h-1}}=0}^{\infty}|cum_{h}(b_{0},b_{t_{1}},\dots,b_{t_{h-1}})|<\infty
\end{equation*}
With these assumptions, $l_{opt}=l^{*}(1+O(T^{-\frac{2}{7}}))$ as $N,T\rightarrow\infty$, where $l^{*}=\arg\min_{l\in\mathbb{N}} AMSE(\hat{\sigma}_{N,T,b,l}^{2},V)$.

\subsection{Bias-Correction on the Variance Estimator}
In the last section, we have pinned down the optimal subsample sizes. However, such choice does not eliminate the bias in general due to bias-variance trade-off. In this section, we will introduce a bias-corrected subsampling variance estimator that reduces the order of bias under the framework (\ref{projection}) and Assumption \ref{a1}-\ref{a5}.\\
\par
Recall that 
\begin{equation*}
    \mathbb{E}[\hat{\sigma}_{N,T,b,l}^{2}]=V_{a}+\frac{b}{l}\sum_{k=-l+1}^{l-1}(1-\frac{|k|}{l})R_{b}(k)+\frac{1}{l}\sum_{k=-l+1}^{l-1}(1-\frac{|k|}{l})R_{e}(k)+O(\frac{b}{N})+O(\frac{bl}{N^{2}})
\end{equation*}
then the bias is
\begin{align}\label{bias}
    Bias(\hat{\sigma}_{N,T,b,l}^{2},V)&=\mathbb{E}[\hat{\sigma}_{N,T,b,l}^{2}]-V\notag\\
    &=-\frac{b}{l^{2}}\sum_{k=-\infty}^{\infty}|k|R_{b}(k)+(\frac{b}{l}-c)\sum_{k=-\infty}^{\infty}R_{b}(k)+2\frac{b}{l}\sum_{k=l}^{\infty}(\frac{k}{l}-1)R_{b}(k)\notag\\
    &+\frac{1}{l}\sum_{k=-l+1}^{l-1}(1-\frac{|k|}{l})R_{e}(k)+O(\frac{b}{N})+O(\frac{bl}{N^{2}})
\end{align}
The detail is in Appendix \ref{Append_detail3}. The term $(\frac{b}{l}-c)\sum_{k=-\infty}^{\infty}R_{b}(k)$ involves the unknown constant $c=\lim\frac{N}{T}$, then for simplicity of exposition, we assume $\frac{N}{T}$ is close enough to $c$ such that $(\frac{N}{T}-c)\sum_{k=-\infty}^{\infty}R_{b}(k)$ is in a small order. With $l=l_{opt}$ defined in (\ref{lopt}) and $b=\frac{N}{T}\cdot l$, $b,l=O(T^{\frac{1}{3}})=O(N^{\frac{1}{3}})$, and (\ref{bias}) simplifies to
\begin{equation}\label{bias-simple}
    Bias(\hat{\sigma}_{N,T,b,l}^{2},V)=-\frac{1}{l}\sum_{k=-\infty}^{\infty}c|k|R_{b}(k)+\frac{1}{l}\sum_{k=-l+1}^{l-1}(1-\frac{|k|}{l})R_{e}(k)+o(\frac{1}{l})
\end{equation}
Due to the finiteness of $\sum_{k=-\infty}^{\infty}|k|R_{b}(k)$ (see Appendix \ref{Append_detail2}), $Bias(\hat{\sigma}_{N,T,b,l}^{2},V)=O(\frac{1}{l})$. This inspires the following bias-corrected variance estimator
\begin{equation}\label{bcsighat}
    \hat{\sigma}_{N,T,b,l}^{2,BC}=\hat{\sigma}_{N,T,b,l}^{2}-D\cdot(\hat{\sigma}_{N,T,\Tilde{b},\Tilde{l}}^{2}-\hat{\sigma}_{N,T,b,l}^{2})
\end{equation}
for some constant $D$, where $\Tilde{b}\ll b$, $\Tilde{l}\ll l$ and $\Tilde{b},\Tilde{l}\rightarrow\infty$ as $N,T\rightarrow\infty$. The idea is that we use $\hat{\sigma}_{N,T,\Tilde{b},\Tilde{l}}^{2}$, which is the same type as $\hat{\sigma}_{N,T,b,l}^{2}$ but with even fewer subsamples, to estimate $\mathbb{E}[\hat{\sigma}_{N,T,b,l}^{2}]$. Therefore $\hat{\sigma}_{N,T,\Tilde{b},\Tilde{l}}^{2}-\hat{\sigma}_{N,T,b,l}^{2}$ estimates $Bias(\hat{\sigma}_{N,T,b,l}^{2},V)$, as $\hat{\sigma}_{N,T,b,l}^{2}$ is already an estimator to $V$. And by choosing D wisely, we can eliminate the leading terms in (\ref{bias-simple}) so that $Bias(\hat{\sigma}_{N,T,b,l}^{2,BC},V)$ has a smaller order compared to $Bias(\hat{\sigma}_{N,T,b,l}^{2},V)$. If we choose $D=\frac{\Tilde{l}}{l-\Tilde{l}}$, then we have
\begin{align*}
    Bias(\hat{\sigma}_{N,T,b,l}^{2,BC},V)&=\frac{2}{l-\Tilde{l}}\sum_{k=\Tilde{l}}^{l-1}R_{e}(k)-\frac{1}{l-\Tilde{l}}\sum_{k=-l+1}^{l-1}\frac{|k|}{l}R_{e}(k)+\frac{1}{l-\Tilde{l}}\sum_{k=-\Tilde{l}+1}^{\Tilde{l}-1}\frac{|k|}{\Tilde{l}}R_{e}(k)+o(\frac{1}{l})\\
    &=o(\frac{1}{l})
\end{align*}
\begin{corollary}\label{co3}
    Under the framework (\ref{projection}) and Assumption \ref{a1} and \ref{a3}-\ref{a5}, if $\{\tau_{NT}^{4}(\hat{\theta}_{NT}-\mathbb{E}[\hat{\theta}_{NT}])^{4}\}$ is uniformly integrable, then
    \begin{equation*}
        \hat{\sigma}_{N,T,b,l}^{2,BC}\rightarrow_{p}V
    \end{equation*}
    where $\hat{\sigma}_{N,T,b,l}^{2,BC}$ is defined in (\ref{bcsighat}).
\end{corollary}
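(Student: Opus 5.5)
The plan is to reduce the corollary to Theorem \ref{thm2}, applied twice, together with a bound on the correction constant $D$. By definition,
\begin{equation*}
    \hat{\sigma}_{N,T,b,l}^{2,BC}=(1+D)\,\hat{\sigma}_{N,T,b,l}^{2}-D\,\hat{\sigma}_{N,T,\Tilde{b},\Tilde{l}}^{2},
\end{equation*}
so it is an affine combination of two subsampling variance estimators. The two pairs $(b,l)$ and $(\Tilde{b},\Tilde{l})$ must each satisfy Assumption \ref{a3}. For $(b,l)$ this is assumed. For $(\Tilde{b},\Tilde{l})$ we have $\Tilde{b}\ll b=o(N)$, $\Tilde{l}\ll l=o(T)$ and $\Tilde{b},\Tilde{l}\rightarrow\infty$. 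We impose, as in the construction of (\ref{bcsighat}), that the smaller pair keeps the same aspect ratio, i.e.\ $\Tilde{b}/\Tilde{l}-N/T\rightarrow0$; this is automatic if we take $\Tilde{b}=\frac{N}{T}\Tilde{l}$, as was done for $(b,l)$. As remarked after Theorem \ref{thm2}, the condition $\tau_{bl}/\tau_{NT}\rightarrow0$ plays no role in that theorem, so it need not be verified for the smaller pair. The uniform integrability hypothesis and $\lim\Var(\tau_{NT}\hat{\theta}_{NT})=V<\infty$ are the same for both estimators. For the mean estimator, Assumptions \ref{a4}--\ref{a5} give this limit $V=V_a+c\sum_k R_b(k)$, finite by Appendix \ref{Append_detail1}. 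Theorem \ref{thm2} then yields $\hat{\sigma}_{N,T,b,l}^{2}\rightarrow_{L^2}V$ and $\hat{\sigma}_{N,T,\Tilde{b},\Tilde{l}}^{2}\rightarrow_{L^2}V$, hence both converge in probability.

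Next I decompose the error:
\begin{equation*}
    \hat{\sigma}_{N,T,b,l}^{2,BC}-V=(\hat{\sigma}_{N,T,b,l}^{2}-V)-D\bigl[(\hat{\sigma}_{N,T,\Tilde{b},\Tilde{l}}^{2}-V)-(\hat{\sigma}_{N,T,b,l}^{2}-V)\bigr].
\end{equation*}
The first term is $o_p(1)$. The bracket is also $o_p(1)$. It therefore suffices that $D$ stays bounded. With the choice $D=\frac{\Tilde{l}}{l-\Tilde{l}}$ and $\Tilde{l}\ll l$, meaning $\Tilde{l}/l\rightarrow0$ (or even just $\Tilde{l}/l\le\rho<1$), we get $D\rightarrow0$, or at worst $D\le\rho/(1-\rho)$. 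Then $D\cdot o_p(1)=o_p(1)$, and Slutsky's lemma finishes the argument. In fact the same steps give $L^2$ convergence via Minkowski's inequality, which is stronger than what the corollary claims.

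The main point needing care is the bounded-$D$ step. If a user picked a general constant $D$ growing with the sample, the argument would fail, so the statement should be read with the paper's choice $D=\Tilde{l}/(l-\Tilde{l})$ or with any bounded $D$. The only other real obstacle is making sure the auxiliary sizes $(\Tilde{b},\Tilde{l})$ satisfy the ratio condition $\Tilde{b}/\Tilde{l}-N/T\rightarrow0$; otherwise Theorem \ref{thm2} would deliver a different limit, namely $V_a+\lim(\Tilde{b}/\Tilde{l})\sum_k R_b(k)$. I would therefore state this ratio requirement explicitly at the start of the proof.
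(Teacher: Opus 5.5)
Your argument is correct and is essentially the paper's own: the paper gives no separate proof of the corollary and treats it as an immediate consequence of Theorem \ref{thm2}, applied at both subsample sizes, together with $D=\Tilde{l}/(l-\Tilde{l})$ staying bounded (indeed $D\to0$), which is exactly your route. One small refinement: when $D\to0$ your ratio condition $\Tilde{b}/\Tilde{l}-N/T\to0$ is not needed, because the same uniform integrability you already invoke at $(\Tilde{b},\Tilde{l})$ gives $\mathbb{E}[\hat{\sigma}^{2}_{N,T,\Tilde{b},\Tilde{l}}]\le\tau_{\Tilde{b}\Tilde{l}}^{2}\Var(\hat{\theta}_{\Tilde{b}\Tilde{l}})=O(1)$, so $D\,\hat{\sigma}^{2}_{N,T,\Tilde{b},\Tilde{l}}\rightarrow_{p}0$ whatever its limit is; the ratio condition matters only in your bounded-but-nonvanishing-$D$ variant, and for the paper's bias-cancellation calculation.
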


\subsection{Linear Regression}
We consider the following linear regression
\begin{align}\label{eqlr}
    (Y_{nt},X_{nt}',U_{nt})'&=f(\alpha_{n},\gamma_{t},\epsilon_{nt})\notag\\
    Y_{nt}&=X_{nt}'\beta+U_{nt}\notag\\
    \mathbb{E}[U_{nt}|\mathbf{X}]&=0
\end{align}
Define $a_{n}=\mathbb{E}[X_{nt}U_{nt}|\alpha_{n}]$, $b_{t}=\mathbb{E}[X_{nt}U_{nt}|\gamma_{t}]$, and $e_{nt}=X_{nt}U_{nt}-a_{n}-b_{t}$. Suppose we are interested in the OLS estimator $\hat{\beta}_{OLS}=(\frac{1}{NT }\sum_{n=1}^{N}\sum_{t=1}^{T}X_{nt}X_{nt}')^{-1}(\frac{1}{NT}\sum_{n=1}^{N}\sum_{t=1}^{T}X_{nt}Y_{nt})$ and want to estimate its asymptotic variance. Instead of subsampling the whole $\hat{\beta}_{OLS}$ and computing the variance of its subsample distribution, we can subsample the score $\frac{1}{NT}\sum_{n=1}^{N}\sum_{t=1}^{T}X_{nt}U_{nt}$ and compute its subsample distribution variance $\hat{\Sigma}_{N,T,b,l}$ first, and then estimate the asymptotic variance of $\hat{\beta}_{OLS}$ by $\hat{\phi}_{NT}^{-1}\hat{\Sigma}_{N,T,b,l}\hat{\phi}_{NT}^{-1}$, where $\hat{\phi}_{NT}=\frac{1}{NT }\sum_{n=1}^{N}\sum_{t=1}^{T}X_{nt}X_{nt}'$. By doing so, we transform such problem into a mean-estimation problem and allow ourselves to apply the subsample size selection algorithm and bias-correction technique described previously.\\
\par
The problem of this approach is that $U_{nt}$ is unobservable; hence we might need to replace $X_{nt}U_{nt}$ by $X_{nt}\hat{U}_{nt}$ where $\hat{U}_{nt}=Y_{nt}-X_{nt}'\hat{\beta}_{OLS}$. Let $\check{\Sigma}_{N,T,b,l}$ be the variance of the subsample distribution associate to the practical score $\frac{1}{NT}\sum_{n=1}^{N}\sum_{t=1}^{T}X_{nt}\hat{U}_{nt}$. It turns out that the consistency of $\check{\Sigma}_{N,T,b,l}$ does not require any additional assumption other than the ones securing the asymptotic normality of $\hat{\beta}_{OLS}$ and the consistency of $\hat{\Sigma}_{N,T,b,l}$.
\begin{assumption}\label{a6}
    For some $r>1$ and $\delta>0$: (i)$\{(Y_{nt},X_{nt}',U_{nt}):1\leq n\leq N,1\leq t\leq T\}$ are generated as (\ref{eqlr}), with $\{\alpha_{n}\},\{\gamma_{t}\},\{\varepsilon_{nt}\}$ being mutually independent, $\{\alpha_{n}\}$ being i.i.d. across $n$, and $\{\varepsilon_{nt}\}$ being i.i.d. across $(n,t)$. (ii) $\{\gamma_{t}\}$ is strictly stationary and strong mixing with size $\frac{2r}{r-1}$. (iii) $\phi=\mathbb{E}[X_{nt}X_{nt}']>0$, $\mathbb{E}[||X_{nt}||^{8(r+\delta)}]<\infty$, $\mathbb{E}[||U_{nt}||^{8(r+\delta)}]<\infty$. (iv) Either $a_{n}$ or $b_{t}$ is non-degenerate. (v) $\frac{\tau_{bl}}{\tau_{NT}},\frac{b}{\sqrt{N}},\frac{l}{\sqrt{T}}\rightarrow0$, $\frac{N}{T}-\frac{b}{l}\rightarrow0$, and $b,l\rightarrow\infty$ as $N,T\rightarrow\infty$. (vi) $\{b^{2}(\hat{\theta}_{NT}-\theta)^{4}\}$ is uniformly integrable.
\end{assumption}

\begin{theorem}\label{thm3}
    Under Assumption \ref{a6}, $\hat{t}:=\hat{V}_{N,T,b,l}^{-\frac{1}{2}}\sqrt{N}(\hat{\beta}-\beta)\rightarrow_{d}N(0,1)$ and $\check{t}:=\check{V}_{N,T,b,l}^{-\frac{1}{2}}\sqrt{N}(\hat{\beta}-\beta)\rightarrow_{d}N(0,1)$, where
    \begin{align*}
        \hat{V}_{N,T,b,l}&=\hat{\phi}_{NT}^{-1}\hat{\Sigma}_{N,T,b,l}\hat{\phi}_{NT}^{-1}\\
        \check{V}_{N,T,b,l}&=\hat{\phi}_{NT}^{-1}\check{\Sigma}_{N,T,b,l}\hat{\phi}_{NT}^{-1}
    \end{align*}
\end{theorem}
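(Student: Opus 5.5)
The plan is to reduce both statements to a Slutsky argument built from three ingredients: (i) asymptotic normality of $\sqrt{N}(\hat{\beta}-\beta)$ with limiting variance $\phi^{-1}\Omega\phi^{-1}$, (ii) consistency $\hat{\phi}_{NT}\rightarrow_{p}\phi$, and (iii) consistency of $\hat{\Sigma}_{N,T,b,l}$ and $\check{\Sigma}_{N,T,b,l}$ for $\Omega$. Here $\Omega=\Var(a_{1})+c\sum_{k}\Cov(b_{1},b_{1+k})$ is the long-run variance of the score.

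\textbf{Step 1: asymptotic normality.} Write
\begin{equation*}
\sqrt{N}(\hat{\beta}-\beta)=\hat{\phi}_{NT}^{-1}\sqrt{N}\,\frac{1}{NT}\sum_{n,t}X_{nt}U_{nt}.
\end{equation*}
The score is a mean of $X_{nt}U_{nt}$, which is a Borel function of $(\alpha_{n},\gamma_{t},\varepsilon_{nt})$ with mean zero. By Assumption \ref{a6}(iii) and Cauchy--Schwarz it has $4(r+\delta)$ moments, and the mixing condition in (ii) is exactly Assumption \ref{a4}. Assumption \ref{a5} ($N/T\to c$) is implicitly needed as well. So the Chiang--Hansen--Sasaki result quoted before Assumption \ref{a4} gives
\begin{equation*}
\sqrt{N}\,\frac{1}{NT}\sum_{n,t}X_{nt}U_{nt}\rightarrow_{d}N(0,\Omega),
\end{equation*}
applied to each coordinate and combined by Cram\'er--Wold. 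Condition (iv) guarantees $\Omega>0$.

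For $\hat{\phi}_{NT}\rightarrow_{p}\phi$, apply the same projection to $X_{nt}X_{nt}'$. The average of its $a$-part is $O_{p}(N^{-1/2})$, its $b$-part is $O_{p}(T^{-1/2})$ by the mixing covariance inequality, and its $e$-part is $o_{p}(1)$ by conditional independence. Since $\phi>0$, the continuous mapping theorem gives $\sqrt{N}(\hat{\beta}-\beta)\rightarrow_{d}N(0,\phi^{-1}\Omega\phi^{-1})$.

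\textbf{Step 2: consistency of $\hat{\Sigma}_{N,T,b,l}$.} Apply Theorem \ref{thm2} coordinatewise, and to sums of coordinates for the off-diagonal entries, with $\hat{\theta}_{NT}$ the score and $\tau_{NT}=\sqrt{N}$. This requires two inputs.

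\emph{Variance convergence.} $\Var(\sqrt{N}\hat{\theta}_{NT})\to\Omega$ follows by direct computation, as in the derivation of $V$ in Appendix \ref{Append_detail1}.

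\emph{Fourth-order uniform integrability.} I would get this from a uniform bound on a moment of order higher than four, using the three-part decomposition:
\begin{itemize}
\item for the i.i.d. part, $\mathbb{E}|N^{-1/2}\sum_{n}a_{n}|^{4+\eta}$ is bounded by Rosenthal's inequality;
\item for the mixing part, $(N/T)^{2}\,\mathbb{E}|T^{-1/2}\sum_{t}b_{t}|^{4+\eta}$ is bounded by a Yokoyama-type moment inequality, which the $8(r+\delta)$ moments and the mixing size allow;
\item the $e$-part has a higher-order vanishing contribution.
\end{itemize}
Theorem \ref{thm2} then yields $\hat{\Sigma}_{N,T,b,l}\rightarrow_{p}\Omega$. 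The conclusion $\hat{t}\rightarrow_{d}N(0,1)$ follows by Slutsky's theorem, reading the scalar $t$ as a single coordinate, or as a quadratic form in the vector case.

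\textbf{Step 3: the feasible estimator, which I expect to be the main obstacle.} Write $X_{nt}\hat{U}_{nt}=X_{nt}U_{nt}-X_{nt}X_{nt}'(\hat{\beta}-\beta)$. Then each subsample score is
\begin{equation*}
\check{s}_{i,k}=s_{i,k}-\hat{\phi}_{i,k}(\hat{\beta}-\beta),
\end{equation*}
where $\hat{\phi}_{i,k}$ is the subsample average of $X_{nt}X_{nt}'$. It then suffices to show $\check{\Sigma}-\hat{\Sigma}\rightarrow_{p}0$. Expanding the subsampling variance gives a quadratic term
\begin{equation*}
b\cdot\frac{1}{N_{b}q}\sum_{i,k}\bigl\|(\hat{\phi}_{i,k}-\bar{\phi})(\hat{\beta}-\beta)\bigr\|^{2}
\end{equation*}
plus a cross term, which is controlled by Cauchy--Schwarz against $\hat{\Sigma}=O_{p}(1)$.

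For the quadratic term, note that $\hat{\beta}-\beta=O_{p}(N^{-1/2})$. The average of $\|\hat{\phi}_{i,k}-\bar{\phi}\|^{2}$ has expectation bounded by $\mathbb{E}\|X\|^{4}=O(1)$; in fact it is $O(1/b)$ by the same variance calculation. So the quadratic term is at most $O_{p}(b/N)\to0$, using Assumption \ref{a6}(v).

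The delicate point is making this bound rigorous uniformly over subsamples, since $\hat{\beta}$ depends on the full sample. Because $\hat{\beta}-\beta$ factors out as a common multiplier, no uniform-in-subsample argument is needed: only the averaged quantity must be bounded in $L^{1}$. The $8(r+\delta)$ moments guarantee that $\mathbb{E}\|X\|^{4}$ is finite. Condition (vi) and $b/\sqrt{N}\to0$ provide extra slack if one instead handles the term through the subsample fourth moments of $\hat{\theta}$. Finally, Slutsky's theorem gives $\check{t}\rightarrow_{d}N(0,1)$.
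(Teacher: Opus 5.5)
Your proposal is correct. For $\hat t$ it matches the paper: CHS for the asymptotic normality of $\sqrt{N}(\hat\beta-\beta)$ and for $\hat\phi_{NT}\to_p\phi$, Theorem~\ref{thm2} for $\hat\Sigma_{N,T,b,l}$, then Slutsky. It departs from the paper in two places.

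\textbf{Uniform integrability.} The paper just invokes Theorem~\ref{thm2} and implicitly relies on the high-level uniform-integrability condition in Assumption~\ref{a6}(vi). You instead derive fourth-order uniform integrability of $\sqrt{N}\hat\theta_{NT}$ from primitives: Rosenthal for the i.i.d.\ and conditionally independent parts, and a Yokoyama bound for the mixing part. This is more self-contained. You should, however, make the moment/mixing trade-off explicit. It works because $\frac{2r}{r-1}\cdot\frac{r+\delta-1}{r+\delta}>2$, which leaves room for a $4+\eta$ moment with small $\eta$. It also uses boundedness of $N/T$, which, as you noted, Assumption~\ref{a6} does not state.

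\textbf{The feasible estimator.} For $\check\Sigma_{N,T,b,l}$, both arguments treat the quadratic term $A_1$ essentially the same way: $b\|\hat\beta-\beta\|^2=O_p(b/N)$ times a subsample dispersion of $\hat\phi$ that is $O_p(1)$ or smaller. The difference is the cross term $A_2$.
\begin{itemize}
\item The paper factors out $b(\hat\beta-\beta)$, which is $o_p(1)$ only because $b=o(\sqrt{N})$. It then proves (\ref{eq32})--(\ref{eq34}). This takes a separate mixing-covariance computation for $\hat\theta_{b,l,i,k}\hat\phi_{b,l,i,k}$ that uses $\mathbb{E}[U_{nt}|\mathbf{X}]=0$, plus convergence of $\bar\theta_{N,T,b,l}$ and $\bar\phi_{N,T,b,l}$.
\item Your Cauchy--Schwarz bound, $|u'A_2v|\le (u'\hat\Sigma_{N,T,b,l}u)^{1/2}(v'A_1v)^{1/2}$, disposes of the cross term using only $\hat\Sigma_{N,T,b,l}=O_p(1)$ and $A_1=o_p(1)$.
\end{itemize}
So your route avoids that extra calculation entirely and needs only $b/N\to0$ in this step. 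The paper's route also shows that the subsample cross-moment converges, which the theorem does not need, and pays for it with the stronger rate condition $b=o(\sqrt{N})$.
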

The proof is in Appendix \ref{AppendixC}. (i)-(iv) of Assumption \ref{a6} are used to established the asymptotic normality of $\sqrt{N}(\hat{\beta}-\beta)$, which is obtained by \cite{chiang_standard_2024}. (v)-(vi) of Assumption \ref{a6}, together with the asymptotic normality of $\sqrt{N}(\hat{\beta}-\beta)$, establish the consistency of $\hat{\Sigma}_{N,T,b,l}$ and $\check{\Sigma}_{N,T,b,l}$ and the asymptotic normality of the corresponding t-statistics.

\section{Simulation}\label{sec4}
In this section, we evaluate the validity of the subsampling methods and compare them with other existing methods by simulations. We first consider a model with Gaussian limit in which both subsampling methods and other two-way clustering with serial correlation methods work. We also consider a degenerate and non-Gaussian limit model, in which only the subsampling quantile method should be able to provide the correct asymptotic coverage.

\subsection{Non-degenerate Gaussian Limit}
For $\{(Y_{nt},X_{nt},U_{nt}):1\leq n\leq N,1\leq t\leq T\}$, consider a linear regression
\begin{align*}
    Y_{nt}&=\beta_{0}+\beta_{1}\cdot X_{nt}+U_{nt}\\
    &=Z_{nt}'\beta+U_{nt}
\end{align*}
where $\beta_{0}=\beta_{1}=1$ and $Z_{nt}=[1\quad X_{nt}]'$. The regressor $X_{nt}$ and error $U_{nt}$ are generated as
\begin{align*}
    X_{nt}&=0.3\cdot\alpha_{n}^{x}+0.5\cdot\gamma_{t}^{x}+0.2\cdot\varepsilon_{nt}^{x}\\
    U_{nt}&=0.3\cdot\alpha_{n}^{u}+0.5\cdot\gamma_{t}^{u}+0.2\cdot\varepsilon_{nt}^{u}
\end{align*}
with $(\alpha_{n}^{x},\varepsilon_{nt}^{x},\alpha_{n}^{u},\varepsilon_{nt}^{u})$ being mutually independent standard normal random variables. The time effects $(\gamma_{t}^{x},\gamma_{t}^{u})$ are independently generated via AR(1) process as
\begin{align*}
    \gamma_{t+1}^{x}&=\rho\cdot\gamma_{t}^{x}+v_{t}^{x}\\
    \gamma_{t+1}^{u}&=\rho\cdot\gamma_{t}^{u}+v_{t}^{u}
\end{align*}
where the innovation terms $v_{t}^{x}$ and $v_{t}^{u}$ follow the distribution $N(0,1-\rho^{2})$ and are independent to $\gamma_{t+1}^{x}$ and $\gamma_{t+1}^{u}$ respectively. We choose $\rho\in\{0,0.25,0.5,0.75\}$ to see how our methods perform under different levels of serial correlation. Note that $\rho=0$ represents the case of two-way clustering without serial correlation. And our goal is to test the 95\% confidence interval coverage probability of the OLS estimator $\hat{\beta}_{1}$ using the quantiles of the subsampling distribution and the subsampling variance estimator.\\
\par
For the quantile method, we draw $N_{b}\cdot q$ many subsamples and estimate $\beta_{1}$ using each subsample by OLS. Then, we find the 2.5\% and 97.5\% quantile of the subsample OLS estimators $\{\hat{\beta}_{1,b,l,i,k}:1\leq i\leq N_{b},1\leq k\leq q\}$ and denote them as $c_{b,l}^{L}(0.025)$ and $c_{b,l}^{L}(0.975)$ respectively. The 95\% confidence interval is constructed as $[\hat{\beta}_{1}-\frac{c_{b,l}^{L}(0.975)}{\sqrt{N}},\quad\hat{\beta}_{1}-\frac{c_{b,l}^{L}(0.025)}{\sqrt{N}}]$, where $\hat{\beta}_{1}$ is the OLS estimator of $\beta_{1}$ using the full sample. Table \ref{tb1} shows the coverage probabilities for $\beta_{1}$ with a nominal probability of $95\%$ and 1000 Monte Carlo repetitions under different dependence levels. We vary $(N,T)$ from 50 to 400 for different choices of $\rho$. The coverage probability goes to the nominal level for $\rho\in\{0,0.25,0.5\}$, while there is an under-coverage for $\rho=0.75$. Such phenomenon can also be found in some HAC robust tests, for example, \cite{andrews_heteroskedasticity_1991} and \cite{andrews_improved_1992}. \cite{kiefer_new_2005} has pointed out that HAC robust test tends to over-reject in finite samples if the traditional method of selecting bandwidth (where the bandwidth goes to infinity slower than the sample size) is used, especially when the serial correlation is strong. As the serial correlation gets larger, it is expected to pick a larger $l$ that captures more of the serial correlation. This suggests a data-driven method of selecting subsample size is needed for the quantile method as well.\\
\begin{table}[H]
        \centering
        \begin{tabular}{cccccc|cccccccc}
        \toprule
        $\rho$ & $N$ & $T$ & $b$ & $l$ & Quantile & $\rho$ & $N$ & $T$ & $b$ & $l$ & Quantile\\
        \midrule
        \midrule
        0 & 50 & 50 & 12 & 8 & 0.932 & 0.5 & 50 & 50 & 12 & 8 & 0.879\\
        0 & 100 & 100 & 15 & 10 & 0.967 & 0.5 & 100 & 100 & 15 & 10 & 0.924\\
        0 & 200 & 200 & 25 & 18 & 0.957 & 0.5 & 200 & 200 & 25 & 18 & 0.937\\
        0 & 400 & 400 & 49 & 35 & 0.947 &0.5 & 400 & 400 & 49 & 35 & 0.950\\
        \midrule
        0.25 & 50 & 50 & 12 & 8 & 0.936 & 0.75 & 50 & 50 & 12 & 8 & 0.711\\
        0.25 & 100 & 100 & 15 & 10 & 0.955 & 0.75 & 100 & 100 & 15 & 10 & 0.771\\
        0.25 & 200 & 200 & 25 & 18 & 0.951 & 0.75 & 200 & 200 & 25 & 18 & 0.837\\
        0.25 & 400 & 400 & 49 & 35 & 0.953 & 0.75 & 400 & 400 & 49 & 35 & 0.904\\
        \bottomrule
        \bottomrule
        \end{tabular}
        \caption{Coverage Probability for $\beta_{1}$ with a nominal probability of $95\%$ and 1000 Monte Carlo repetitions, where the confidence interval is constructed using the subsampling quantiles. $\rho$ is the AR coefficient controlling the level of serial correlation. $N$ and $T$ are the sample sizes. $b$ and $l$ are subsample sizes for individuals and time periods respectively.}
        \label{tb1}
\end{table}
\par
For the subsampling variance estimator, instead of subsampling the whole OLS estimators and then computing the subsample variance, we use the two estimators $\hat{V}_{N,T,b,l}^{2}$ and $\check{V}_{N,T,b,l}^{2}$, which are defined in Theorem \ref{thm3}, as well as their bias-corrected versions. The confidence intervals are constructed using the critical value of a standard normal distribution. The data-driven method discussed in Section \ref{sec3} is used to select the subsample size. The block length $l$ is set to be $\max\{4,l_{opt}\}$, where $l_{opt}$ is defined in \ref{lopt}), and $b$ is set to be equal to $\frac{N}{T}\cdot l$. The reason that we bound the subsample size from below is to leave room for the smaller subsample used in the bias-correction. $\Tilde{b}$ and $\Tilde{l}$ used for bias-correction are set to be $\lfloor\sqrt{b}\rfloor$ and $\lfloor\sqrt{l}\rfloor$ respectively. Table \ref{tb2} reports the coverage probabilities for $\beta_{1}$ of our subsampling variance estimators together with the CHS estimator (\cite{chiang_standard_2024}) and the Chen-Vogelsang estimator (CV; \cite{chen_fixed-b_2024}). The nominal probability is $95\%$, and we run 1000 Monte Carlo repetitions for each $\rho\in\{0,0.25,0.5,0.75\}$. The sample size $(N,T)$ vary from 50 to 200 for each choice of $\rho$. \\
\par
The results are as follow. First, both bias-corrected variance estimators, $\hat{V}_{N,T,b,l}^{2,BC}$ and $\check{V}_{N,T,b,l}^{2,BC}$, provide the correct nominal coverage probabilities with large sample size except for $\check{V}_{N,T,b,l}^{2,BC}$ under $\rho=0.75$. When the sample size is small, they have lower coverage due to the usage of a even smaller subsample. $\hat{V}_{N,T,b,l}^{2}$ and $\check{V}_{N,T,b,l}^{2}$ tend to be conservative when the serial correlation is weak. This is largely due to the subsample size selection. When the level of dependence is low, the algorithm usually returns a small optimal block length $l_{opt}$. However, by manually bounding $l$ from below, the effective block length $l$ tends to be longer and leads to a more conservative variance estimation. Meanwhile, CHS and CV are less robust to serial correlation than the subsampling variance estimators. These results are consistent with the asymptotic theory and show the effectiveness of the bias-correction under a moderate sample size.\\
\begin{table}[H]
    \centering
    \begin{tabular}{ccccccccc}
    \toprule
     $\rho$ & $N$ & $T$ & CHS & CV & $\hat{V}_{N,T,b,l}^{2}$ & $\hat{V}_{N,T,b,l}^{2,BC}$ & $\check{V}_{N,T,b,l}^{2}$ & $\check{V}_{N,T,b,l}^{2,BC}$\\
    \midrule
    \midrule
    0 & 50 & 50 & 0.948 & 0.952 & 0.949 & 0.885 & 0.945 & 0.875\\
    0 & 100 & 100 & 0.966 & 0.967 & 0.959 & 0.930 & 0.968 & 0.927\\
    0 & 200 & 200 & 0.972 & 0.972 & 0.976 & 0.944 & 0.968 & 0.941\\
    \midrule
    0.25 & 50 & 50 & 0.926 & 0.929 & 0.966 & 0.918 & 0.952 & 0.895\\
    0.25 & 100 & 100 & 0.940 & 0.942 & 0.969 & 0.945 & 0.963 & 0.930\\
    0.25 & 200 & 200 & 0.922 & 0.923 & 0.970 & 0.946 & 0.974 & 0.945\\
    \midrule
    0.5 & 50 & 50 & 0.880 & 0.893 & 0.961 & 0.923 & 0.936 & 0.896\\
    0.5 & 100 & 100 & 0.910 & 0.915 & 0.971 & 0.950 & 0.950 & 0.926\\
    0.5 & 200 & 200 & 0.905 & 0.909 & 0.960 & 0.948 & 0.954 & 0.945\\
    \midrule
    0.75 & 50 & 50 & 0.796 & 0.820 & 0.941 & 0.930 & 0.871 & 0.846\\
    0.75 & 100 & 100 & 0.848 & 0.857 & 0.926 & 0.933 & 0.906 & 0.907\\
    0.75 & 200 & 200 & 0.891 & 0.894 & 0.939 & 0.946 & 0.916 & 0.922\\
    \bottomrule
    \bottomrule
    \end{tabular}
    \caption{Coverage Probability for $\beta_{1}$ with a nominal probability of $95\%$ and 1000 Monte Carlo repetitions, where the confidence interval is constructed using the subsampling variance estimator. $\rho$ is the AR coefficient controlling the level of serial correlation. $N$ and $T$ are the sample sizes. $\hat{\sigma}_{N,T,b,l}^{2}$ is the subsampling variance estimator, while $\hat{\sigma}_{N,T,b,l}^{2,BC}$ is the bias-corrected version. The standard normal critical value is used.}
    \label{tb2}
\end{table}
\par
Furthermore, we compare our method with the CHS variance estimator and the wild bootstrap method designed by \cite{hounyo_reliable_2024}. We compare the 95\% confidence interval coverage probabilities for $\beta_{1}$ using $\hat{V}_{N,T,b,l}^{2}$, $\check{V}_{N,T,b,l}^{2}$, their bias-corection counterparts, two kinds of CHS variance estimator, and two kinds of wild bootstrap methods under 1000 Monte Carlo simulations. The sample size $N$ and $T$ are both set to be 100, and $\rho$ takes value between 0 to 0.9. The results are shown in Figure \ref{fig_compare}. The blue lines represents $\hat{V}_{N,T,b,l}^{2}$ and $\hat{V}_{N,T,b,l}^{2,BC}$, while $\check{V}_{N,T,b,l}^{2}$ and $\check{V}_{N,T,b,l}^{2,BC}$ are colored in aquamarine. The CHS variance estimator is proposed by \cite{chiang_standard_2024}, and CHS\_V is a variant of CHS variance estimator using a different weight function proposed by \cite{hounyo_reliable_2024}. MWCB\_equal and MWCB\_sym are two wild bootstrap methods using different P values. MWCB\_equal uses equal-tail bootstrap P values, while MWCB\_sym uses symmetric bootstrap P values. As we can see, the wild bootstrap methods remain robust and delivers the correct coverage level across different values of $\rho$. The two CHS variance estimators are relatively sensitive to the serial correlation, as they only have the correct coverage probabilities when $\rho<0.4$. Meanwhile, our theoretical subsampling variance estimators deliver the nominal coverage level and are comparable to the wild bootstrap method except when $\rho=0.9$. The practical subsampling variance estimators behave slightly worse but are still more robust to the serial correlation compared to two CHS estimators. Under the case when $\rho=0.9$ and the time effects are highly correlated, the subsampling variance estimators exhibit significant under-coverage. However, $\hat{V}_{N,T,b,l}^{2,BC}$ still has a coverage probability around 90\%, while its non-corrected counterpart only has about 85\%. This shows the effectiveness of the bias-correction. Overall, the subsampling method is comparable to the leading one of the existing methods in terms of the inference quality, except when the level of serial correlation is extremely high.\\
\begin{figure}[H]
    \centering
    \includegraphics[width=\textwidth]{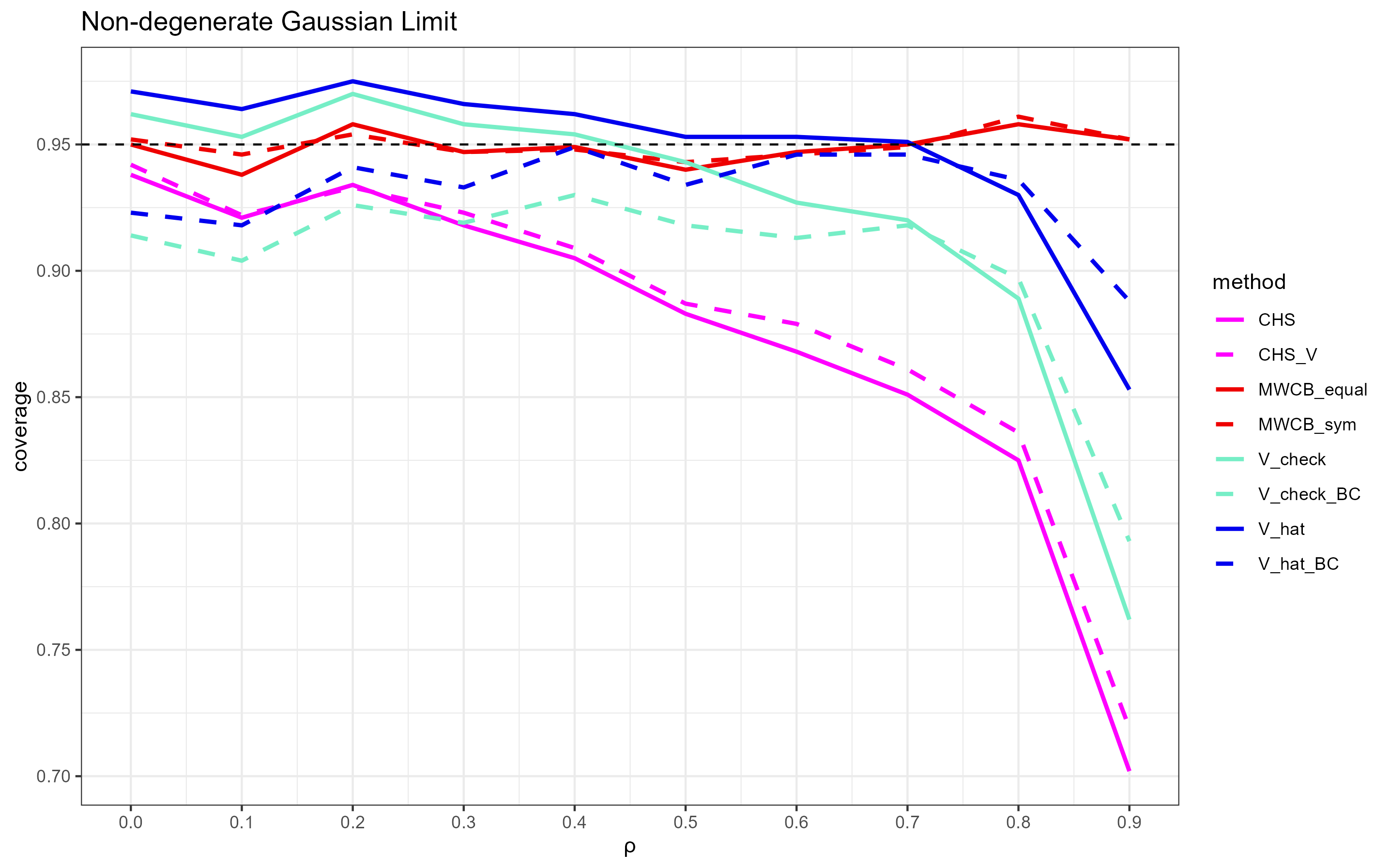}
    \caption{Coverage Probability for $\beta_{1}$ with a nominal probability of $95\%$ and 1000 Monte Carlo repetitions under various dependence level $\rho$. The two feasible subsampling variance estimators use the B\"uhlmann's iterative method to select subsample sizes. The two bootstrap methods use 500 bootstrap replications. The sample size is set to be $N=T=100$.}
    \label{fig_compare}
\end{figure}

\subsection{Degenerate Non-Gaussian Limit}
In the part, we consider a non-separable heterogeneity model (Example 1.7 in \cite{menzel_bootstrap_2021}).
\begin{equation*}
    X_{nt}=15\cdot\alpha_{n}\gamma_{t}+0.1\cdot\varepsilon_{nt}
\end{equation*}
$\alpha_{n},\gamma_{t}$, and $\varepsilon_{nt}$ are mutually independent. $\alpha_{n}$ and $\varepsilon_{nt}$ are two standard normal random variables, where $\gamma_{t}$ is an AR(1) process generated as
\begin{equation*}
    \gamma_{t+1}=\rho\cdot\gamma_{t}+v_{t}
\end{equation*}
where the innovation terms $v_{t}$ follows the distribution $N(0,1-\rho^{2})$ and is independent to $\gamma_{t+1}$. Since both $\alpha_{n}$ and $\gamma_{t}$ have an expectation of 0, then  $\frac{1}{\sqrt{NT}}\sum_{n=1}^{N}\sum_{t=1}^{T}X_{nt}\rightarrow_{d}15\cdot Z_{1}Z_{2}+0.1\cdot Z_{3}$ as described in Section \ref{sec2}. The coefficients 15 and 0.1 are chosen to amplify the non-Gaussianity. Our goal is to test the 95\% confidence interval coverage probability of the sample average $\frac{1}{\sqrt{NT}}\sum_{n=1}^{N}\sum_{t=1}^{T}X_{nt}$. For the subsampling quantile method, we again use a two-sided equal-tailed confidence interval. We vary $(N,T)$ from 50 to 200 for $\rho\in\{0,0.25,0.5,0.75\}$, and the subsample size $b$ and $l$ are chosen accordingly under each pair of $(N,T)$. As shown in Table \ref{tb3}, with the same subsample size choice across different $\rho$'s, the subsampling quantile method is slightly conservative when the serial correlation is weaker but remains overall valid. This little drawback can be mitigated by a data-driven bandwidth selection algorithm. However, as far as we know, there is no theoretical justified algorithm designed for such method.
\begin{table}[H]
        \centering
        \begin{tabular}{cccccc|cccccccc}
        \toprule
        $\rho$ & $N$ & $T$ & $b$ & $l$ & Quantile & $\rho$ & $N$ & $T$ & $b$ & $l$ & Quantile\\
        \midrule
        \midrule
        0 & 50 & 50 & 6 & 6 & 0.966 & 0.5 & 50 & 50 & 6 & 6 & 0.956\\
        0 & 100 & 100 & 8 & 8 & 0.962 & 0.5 & 100 & 100 & 8 & 8 & 0.942\\
        0 & 200 & 200 & 12 & 12 & 0.965 & 0.5 & 200 & 200 & 12 & 12 & 0.958\\
        \midrule
        0.25 & 50 & 50 & 6 & 6 & 0.968 & 0.75 & 50 & 50 & 6 & 6 & 0.944\\
        0.25 & 100 & 100 & 8 & 8 & 0.964 & 0.75 & 100 & 100 & 8 & 8 & 0.946\\
        0.25 & 200 & 200 & 12 & 12 & 0.969 & 0.75 & 200 & 200 & 12 & 12 & 0.943\\
        \bottomrule
        \bottomrule
        \end{tabular}
        \caption{Coverage Probability for $\Bar{X}_{NT}$ with a nominal probability of $95\%$ and 1000 Monte Carlo repetitions, where the confidence interval is constructed using the subsampling quantiles. $\rho$ is the AR coefficient controlling the level of serial correlation. $N$ and $T$ are the sample sizes. $b$ and $l$ are subsample sizes for individuals and time periods respectively.}
        \label{tb3}
\end{table}
Furthermore, we also compare the subsampling quantile method with the CHS variance estimator and the wild bootstrap method of \cite{hounyo_reliable_2024}. Under this degenerate non-Gaussian limit setting, CHS variance estimator and the wild bootstrap method are known to be theoretically invalid. The sample size $N$ and $T$ are both set to be 100, and $\rho$ takes value between 0 to 0.9. The results are shown in Figure \ref{fig_compare_degenerate}. With the non-Gaussian limit, the CLT based CHS variance estimator and the t-statistics based wild bootstrap method are too conservative. Especially, the wild bootstrap method consistently has a coverage probability around 99\%. Meanwhile the subsampling quantile method can generally provide the correct coverage probabilities except when the serial correlation is extremely strong. When $\rho<0.8$, the subsampling quantile method offers a coverage between 95\% and 96.5\%. This contrast highlights the advantage of our subsampling quantile method. It works under both Gaussian and non-Gaussian limits, as well as both non-degenerate and degenerate cases if we know whether it is degenerate or not.
\begin{figure}[H]
    \centering
    \includegraphics[width=\textwidth]{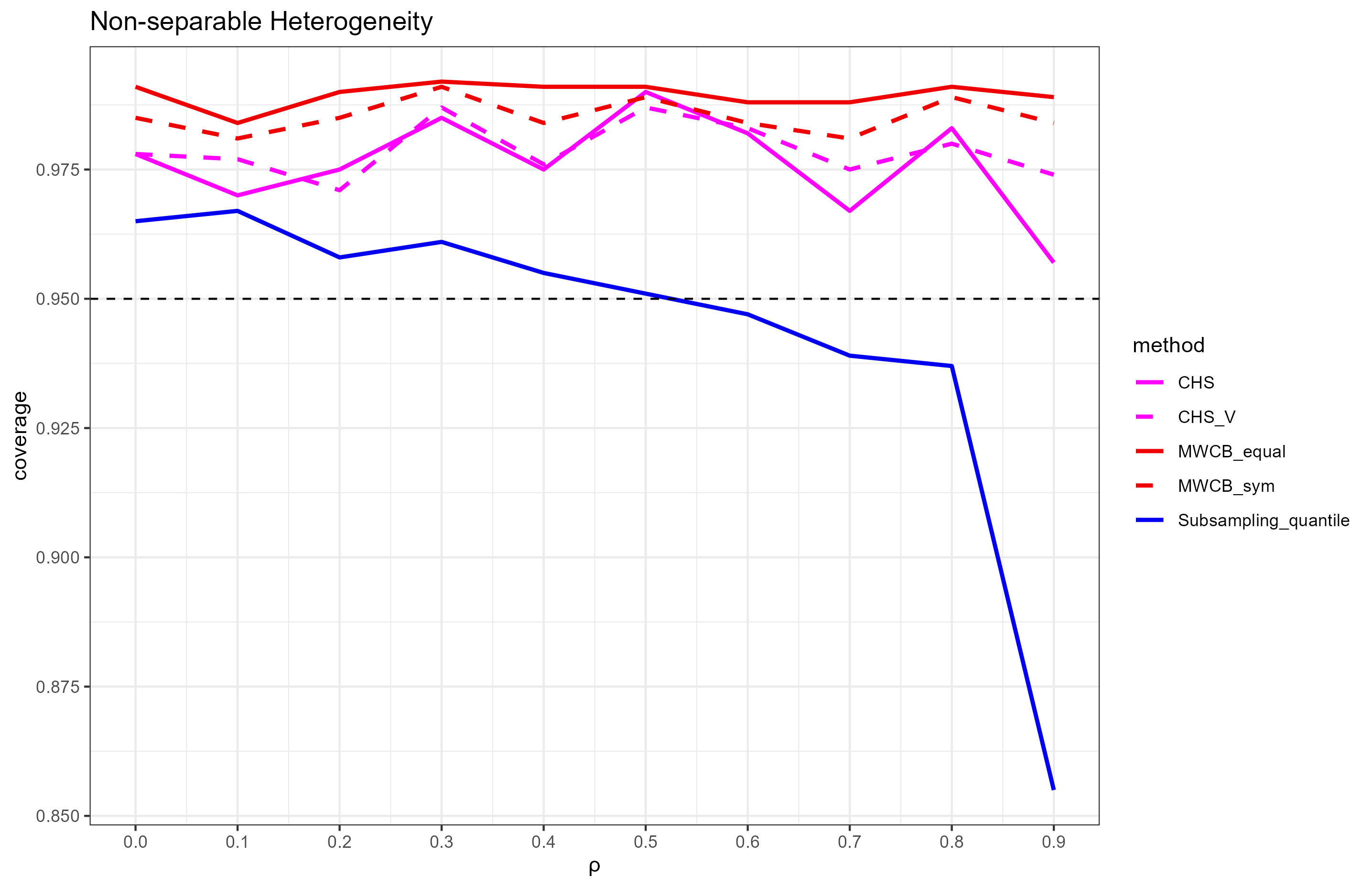}
    \caption{Coverage Probability for $\Bar{X}_{NT}$ with a nominal probability of $95\%$ and 1000 Monte Carlo repetitions under various dependence level $\rho$. The sample size is set to be $N=T=100$. The subsample sizes $b$ and $l$ are both set to be 8. The two bootstrap methods use 500 bootstrap replications.}
    \label{fig_compare_degenerate}
\end{figure}

Additional simulation results are in Appendix \ref{Append_sim}.

\section{Conclusion and Discussion}
In this paper, under a two-way clustering with serial correlation structure, we show the subsampling distribution converges to the limiting distribution in probability pointwise as long as the limit is well-defined. The subsampling distribution is defined to be the empirical distribution using all possible and proper subsets of the original sample. The pointwise convergence can be strengthened to uniform convergence if the limiting distribution is continuous. Hence, the quantiles of the subsampling distribution converge to their counterparts in the limiting distribution, suggesting the confidence intervals constructed from the subsampling distribution have the correct asymptotic coverage probabilities. Most importantly, such convergence does not depend on any specific form of the limiting distribution nor any specific convergence rate. We conduct simulation studies under different limiting distributions and convergence rates to confirm the validity of this quantile approach. Especially, the result suggests that the subsampling quantile method still provides the correct asymptotic coverage probability under a degenerate non-Gaussian limit, where all existing methods on two-way clustering with serial correlation fail. We further propose a consistent variance estimator of the limiting distribution, which is the variance of the subsampling distribution. Besides, a data-driven method to select subsample size for this variance estimator is introduced by connecting this subsampling variance estimator to the well-studied spectral density estimator, when the parameter of interest is the population average. We adopt the iterative method introduced by \cite{buhlmann_locally_1996}, and the optimal subsample sizes $b$ and $l$ are in the order of $N^{\frac{1}{3}}$ and $T^{\frac{1}{3}}$ respectively. Given this rate, we design a bias-corrected variance estimator which reduces the bias from $O(T^{-\frac{1}{3}})$ to $o(T^{-\frac{1}{3}})$. The bias-reduction is done by using another subsampling variance estimator with smaller subsample sizes to estimate the mean of the variance estimator. The simulation studies suggest our variance estimators are comparable, in terms of the inference quality, to the existing leading method, except when the serial correlation is extremely strong. When the serial correlation is extremely strong, our subsampling variance estimators rely on a large subsample size to deliver the desired coverage level. Nevertheless, when the serial correlation is moderate, our methods can deliver the correct coverage level under a reasonable sample size and are easy to implement.\\
\par
There are at least two improvements in the paper that could be made in the future. First, we have not found a proper subsample size selection algorithm for the quantile method nor the variance estimator with a general parameter of interest. Yet, one possible approach is to use the minimum volatility method. When the subsample size is too large, there are not enough variations between different $\hat{\theta}_{b,l,i,k}$ and they are all close to $\hat{\theta}_{NT}$. Then, the subsampling distribution will be too concentrated, and the subsampling confidence intervals tend to under-cover. When the subsample size is too small, the intervals can either over-cover or under-cover. The under-coverage comes from the possibility that $\frac{\tau_{bl}}{\tau_{NT}}$ being too small and resulting in a very tight confidence interval. When the subsample size is in the right range, we should expect the subsampling confidence intervals having coverage probabilities close to the nominal level. This says that for a certain range of the subsample size, the endpoints of confidence intervals should be smooth as a function of subsample size. Therefore, a heuristic method is to pick subsample size from a sequence such that the endpoints are least volatile at the pick. For the variance estimator, we pick the number that yields the least varying variance estimator. Unfortunately, there is no theorem that supports the optimality of such pick in subsampling. \cite{bickel_choice_2008} proposes a similar method that works for m-out-of-n bootstrap, which has a lot of similarity to subsampling, to select the bootstrap sample size $m$ and claims optimality. However, their theorem is built on the important fact that the bootstrap distribution converges anyway even if $\frac{m}{n}\rightarrow1$, while the convergence of subsampling distribution heavily relies on the subsample size in a smaller order of the original sample size. Therefore, a theorem of optimality is still needed under the subsampling framework. Secondly, both inference methods rely on a prior knowledge of the convergence rate, and it appears to be a more urging problem to the quantile method. \cite{politis_subsampling_1999} has shown that it is possible to apply subsampling methods to a problem with an unknown convergence rate (see Chapter 8). Yet, their method and result are for cross-sectional or time-series data, not panel data. Although a slight modification is probably enough to make their method work in the panel setting, details await to be shown.\\
\par
It might be possible to extend the theorems for subsampling under two-way clustering to a more sophisticated setting. For example, if the individuals are not i.i.d any more but spatially correlated, subsampling still has a chance to work. \cite{lahiri_prediction_1999} has shown that their proposed subsampling method provides accurate approximations to the sampling distributions of various functionals of the spatial cumulative distribution function predictor, especially the quantiles. Then, with an appropriate choice of the time indices, subsampling method could potentially work under two-way clustering with serial correlation and spatial dependence.
\newpage
\appendix
\textbf{\huge Appendix}
\section{Proof of Theorem \ref{thm1}}\label{AppendixA}
\begin{proof}
    (a): Define
    \[
    U_{N,T,b,l}(x):=\frac{1}{N_{b}\cdot q}\sum_{i=1}^{N_{b}}\sum_{k=1}^{q}\mathbbm{1}\{\tau_{bl}(\hat{\theta}_{b,l,i,k}-\theta)\leq x\}
    \]
    Fix $\delta>0$, we have $x-\delta\leq x-\tau_{bl}(\theta-\hat{\theta}_{NT})$ whenever $|\tau_{bl}(\theta-\hat{\theta}_{NT})|<\delta$. Define $E=\{\omega:|\tau_{bl}(\theta-\hat{\theta}_{NT})|<\delta\}$. Since $\tau_{bl}(\theta-\hat{\theta}_{NT})=\frac{\tau_{bl}}{\tau_{NT}}\cdot\tau_{NT}(\theta-\hat{\theta}_{NT})\rightarrow_{d}0$, then $\mathbb{P}(E)\rightarrow1$. Therefore, with a probability approaching to 1, for any $x\in\mathbb{R}$
    \begin{equation}\label{eq1}
        U_{N,T,b,l}(x-\delta)\leq L_{N,T,b,l}(x)\leq U_{N,T,b,l}(x+\delta)
    \end{equation}
    If $U_{N,T,b,l}(x)\rightarrow_{p}J(x)$ at every continuous point of $J$, and if $J$ is continuous at $x\pm\delta$, then (\ref{eq1}) implies
    \[
    J(x-\delta)-\delta\leq L_{N,T,b,l}(x)\leq J(x+\delta)+\delta
    \]
    with a probability going to 1. By continuity, for any $\varepsilon>0$, we can pick some $\delta>0$ such that
    \[
    J(x)-\varepsilon-\delta\leq L_{N,T,b,l}(x)\leq J(x)+\varepsilon+\delta
    \]
    with a probability going to 1. Since we can pick $\varepsilon$ and $\delta$ arbitrarily small, then $L_{N,T,b,l}(x)\rightarrow_{p}J(x)$ for any $x\in\mathbb{R}$ at which $J$ is continuous as desired.\\
    \\
    To show $U_{N,T,b,l}(x)\rightarrow_{p}J(x)$ at every continuous point of $J$, let $J$ be continuous at $x$ and observe
    \[
    \begin{split}
        \mathbb{E}[U_{N,T,b,l}(x)]&=\frac{1}{N_{b}\cdot q}\sum_{i=1}^{N_{b}}\sum_{k=1}^{q}\mathbb{E}[\mathbbm{1}\{\tau_{bl}(\hat{\theta}_{b,l,i,k}-\theta)\leq x\}]\\
        &=\mathbb{E}[\mathbbm{1}\{\tau_{bl}(\hat{\theta}_{b,l,i,1}-\theta)\leq x\}]\\
        &=\mathbb{P}(\tau_{bl}(\hat{\theta}_{b,l,i,1}-\theta)\leq x)\\
        &=J_{bl}(x)\rightarrow J(x)
    \end{split}    
    \]
    Then, it is suffice to show $\Var(U_{N,T,b,l}(x))\rightarrow0$. Let $E_{i,k}=\mathbbm{1}\{\tau_{bl}(\hat{\theta}_{b,l,i,k}-\theta)\leq x\}$. Then
    \begin{align}\label{eq2}
        \Var(U_{N,T,b,l}(x))&=\frac{1}{N_{b}^{2}q^{2}}\Var(\sum_{i=1}^{N_{b}}\sum_{k=1}^{q}\mathbbm{1}\{\tau_{bl}(\hat{\theta}_{b,l,i,k}-\theta)\leq x\})\notag\\
        &=\frac{1}{N_{b}^{2}q^{2}}[q\cdot\Var(\sum_{i=1}^{N_{b}}E_{i,1})+2\sum_{k=1}^{q-1}\sum_{h=1}^{q-k}\Cov(\sum_{i=1}^{N_{b}}E_{i,k},\sum_{i=1}^{N_{b}}E_{i,k+h})]\notag\\
        &=\frac{1}{N_{b}^{2}\cdot q}\Var(\sum_{i=1}^{N_{b}}E_{i,1})+\frac{2}{N_{b}^{2}\cdot q^{2}}\sum_{k=1}^{q-1}\sum_{h=1}^{q-k}\Cov(\sum_{i=1}^{N_{b}}E_{i,k},\sum_{i=1}^{N_{b}}E_{i,k+h})\notag\\
        &=\frac{1}{N_{b}^{2}\cdot q}[N_{b}\cdot\Var(E_{i,1})+2\sum_{i=1}^{N_{b}-1}\sum_{m=1}^{N_{b}-i}\Cov(E_{i,1},E_{i+m,1})]+\frac{2}{N_{b}^{2}\cdot q^{2}}\sum_{k=1}^{q-1}\sum_{h=1}^{q-k}\sum_{i=1}^{N_{b}}\sum_{j=1}^{N_{b}}\Cov(E_{i,k},E_{j,k+h})\notag\\
        &=\frac{1}{N_{b}\cdot q}\cdot\Var(E_{i,1})+\frac{2}{N_{b}^{2}\cdot q}\sum_{i=1}^{N_{b}-1}\sum_{m=1}^{N_{b}-i}\Cov(E_{i,1},E_{i+m,1})\notag\\
        &+\frac{2}{N_{b}^{2}\cdot q^{2}}\sum_{i=1}^{N_{b}}\sum_{j=1}^{N_{b}}\sum_{h=1}^{l-1}\sum_{k=1}^{q-h}\Cov(E_{i,k},E_{j,k+h})+\frac{2}{N_{b}^{2}\cdot q^{2}}\sum_{i=1}^{N_{b}}\sum_{j=1}^{N_{b}}\sum_{h=l}^{q-1}\sum_{k=1}^{q-h}\Cov(E_{i,k},E_{j,k+h})
    \end{align}
    The first term
    \begin{equation}\label{eq3}
        \frac{\Var(E_{i,1})}{N_{b}\cdot q}\rightarrow0
    \end{equation}
    as $\Var(E_{i,1})\in[0,1]$. For the second term, since $\Cov(E_{i,1},E_{i+m,1})$ is the covariance of two binary random variables, then $\Cov(E_{i,1},E_{i+m,1})\in[-1,1]$. Hence,
    \begin{equation}\label{eq4}
        |\frac{2}{N_{b}^{2}\cdot q}\sum_{i=1}^{N_{b}-1}\sum_{m=1}^{N_{b}-i}\Cov(E_{i,1},E_{i+m},1)|\leq\frac{2}{N_{b}^{2}\cdot q}\sum_{i=1}^{N_{b}-1}\sum_{m=1}^{N_{b}-i}1\leq\frac{2}{q}\rightarrow0
    \end{equation}
    For the third term, observe
    \begin{align}\label{eq5}
        |\frac{2}{N_{b}^{2}\cdot q^{2}}\sum_{i=1}^{N_{b}}\sum_{j=1}^{N_{b}}\sum_{h=1}^{l-1}\sum_{k=1}^{q-h}\Cov(E_{i,k},E_{j,k+h})|&=|\frac{2}{N_{b}^{2}\cdot q^{2}}\sum_{i=1}^{N_{b}}\sum_{j=1}^{N_{b}}\sum_{h=1}^{l-1}(q-h)\Cov(E_{i,k},E_{j,k+h})|\notag\\
        &\leq\frac{2}{N_{b}^{2}\cdot q^{2}}\sum_{i=1}^{N_{b}}\sum_{j=1}^{N_{b}}\sum_{h=1}^{l-1}(q-h)|\Cov(E_{i,k},E_{j,k+h})|\notag\\
        &\leq\frac{2}{N_{b}^{2}\cdot q^{2}}\sum_{i=1}^{N_{b}}\sum_{j=1}^{N_{b}}\sum_{h=1}^{l-1}(q-h)\notag\\
        &\leq\frac{2(l-1)}{q}\rightarrow0
    \end{align}
    For the fourth term, by Lemma \ref{le1}, $|\Cov(E_{i,k},E_{j,k+h})|\leq4\alpha_{\gamma}(h-l+1)$ whenever $h\geq l$. Therefore,
    \begin{align}\label{eq6}
        |\frac{2}{N_{b}^{2}\cdot q^{2}}\sum_{i=1}^{N_{b}}\sum_{j=1}^{N_{b}}\sum_{h=l}^{q-1}\sum_{k=1}^{q-h}\Cov(E_{i,k},E_{j,k+h})|&\leq\frac{8}{N_{b}^{2}\cdot q^{2}}\sum_{i=1}^{N_{b}}\sum_{j=1}^{N_{b}}\sum_{h=l}^{q-1}(q-h)\alpha_{\gamma}(h-l+1)\notag\\
        &\leq\frac{8}{q}\sum_{h=l}^{q-1}\alpha_{\gamma}(h-l+1)\notag\\
        &=\frac{8}{q}\sum_{m=1}^{q-l}\alpha_{\gamma}(m)\rightarrow0
    \end{align}
    Plugging (\ref{eq3}), (\ref{eq4}), (\ref{eq5}), and (\ref{eq6}) back to (\ref{eq2}) gives $\Var(U_{N,T,b,l}(x))\rightarrow0$, and this completes the proof of (a).\\
    \\
    (b): For the simplicity of notations, we may assume there exists an underlying index $m$ such that $m\rightarrow\infty$ is equivalent to $N,T\rightarrow\infty$. Let $\{F_{m}\}$ be a sequence of distribution functions, which can be either continuous or discontinuous, and $F$ be a continuous distribution function with $F_{m}(x)\rightarrow F(x)$ $\forall x\in\mathbb{R}$. For any $k\in\mathbb{N}$, define $x_{j,k}$ satisfying $F(x_{j,k})=\frac{j}{k}$, where $1\leq j\leq k-1$, and $x_{0,k}=-\infty$, $x_{k,k}=\infty$. Then $\forall x\in[x_{j,k},x_{j+1,k}]$
    \begin{align*}
        F_{m}(x)-F(x)&\leq F_{m}(x_{j+1,k})-F(x_{j,k})=F_{m}(x_{j+1,k})-F(x_{j+1,k})+\frac{1}{k}\\
        F_{m}(x)-F(x)&\geq F_{m}(x_{j,k})-F(x_{j+1,k})=F_{m}(x_{j,k})-F(x_{j,k})+\frac{1}{k}
    \end{align*}
    Hence,
    \[
    \sup_{x}|F_{m}(x)-F(x)|\leq\max_{0\leq j\leq k}2|F_{m}(x_{j,k})-F(x_{j,k})|+\frac{1}{k}
    \]
    For any $\varepsilon>0$, let $K\in\mathbb{N}$ satisfy $\frac{1}{K}<\frac{\varepsilon}{2}$, then
    \begin{align*}
        \mathbb{P}(\sup_{x}|F_{m}(x)-F(x)|>\varepsilon)&\leq\mathbb{P}(\max_{0\leq j\leq K}2|F_{m}(x_{j,K})-F(x_{j,K})|>\varepsilon-\frac{1}{K})\\
        &\leq\mathbb{P}(\max_{0\leq j\leq K}|F_{m}(x_{j,K})-F(x_{j,K})|>\frac{\varepsilon}{4})\\
        &\leq\sum_{j=0}^{K}\mathbb{P}(|F_{m}(x_{j,K})-F(x_{j,K})|>\frac{\varepsilon}{4})\rightarrow0
    \end{align*}
    Taking $F_{m}(x)=L_{N,T,b,l}(x)$ and $F(x)=J(x)$ completes the proof of (b).\\
    \\
    (c): By continuity, define $c_{1-\alpha}$ satisfying $J(c_{1-\alpha})=1-\alpha$. Since $L_{N,T,b,l}(c_{1-\alpha})\rightarrow_{p}J(c_{1-\alpha})$, then with a probability going to 1, for any $\varepsilon>0$,
    \begin{align*}
        L_{N,T,b,l}(c_{1-\alpha}-\varepsilon)&<1-\alpha<L_{N,T,b,l}(c_{1-\alpha}+\varepsilon)\\
        c_{1-\alpha}-\varepsilon&\leq c_{b,l}^{L}(1-\alpha)\leq c_{1-\alpha}+\varepsilon
    \end{align*}
    Hence, $J_{NT}(c_{1-\alpha}-\varepsilon)+op(1)\leq J_{NT}(c_{b,l}^{L}(1-\alpha))\leq J_{NT}(c_{1-\alpha}+\varepsilon)+op(1)$. Since $J_{NT}(c_{1-\alpha})\rightarrow J(c_{1-\alpha})$, then
    \begin{equation*}
        J(c_{1-\alpha}-\varepsilon)\leq\lim\inf J_{NT}(c_{b,l}^{L}(1-\alpha))\leq\lim\sup J_{NT}(c_{b,l}^{L}(1-\alpha))\leq J(c_{1-\alpha}+\varepsilon)
    \end{equation*}
    for any $\varepsilon>0$. By the continuity of $J$, $\lim J_{NT}(c_{b,l}^{L}(1-\alpha))=J(c_{1-\alpha})=1-\alpha$ as desired.
\end{proof}

\section{Proof of Theorem \ref{thm2}}\label{AppendixB}
\begin{proof}
    Since
    \begin{align*}
        \hat{\sigma}_{N,T,b,l}^{2}&=\frac{\tau_{bl}^{2}}{N_{b}\cdot q}\sum_{i=1}^{N_{b}}\sum_{k=1}^{q}(\hat{\theta}_{b,l,i,k}-\Bar{\theta}_{N,T,b,l})^{2}\\
        &=\frac{\tau_{bl}^{2}}{N_{b}\cdot q}\sum_{i=1}^{N_{b}}\sum_{k=1}^{q}(\hat{\theta}_{b,l,i,k}-\mathbb{E}[\hat{\theta}_{b,l,i,k}])^{2}-\tau_{bl}^{2}(\Bar{\theta}_{N,T,b,l}-\mathbb{E}[\hat{\theta}_{b,l,i,k}])^{2}\\
        &=\hat{S}_{N,T}-\hat{R}_{N,T}^{2}
    \end{align*}
    where
    \begin{align*}
        \hat{S}_{N,T}&=\frac{\tau_{bl}^{2}}{N_{b}\cdot q}\sum_{i=1}^{N_{b}}\sum_{k=1}^{q}(\hat{\theta}_{b,l,i,k}-\mathbb{E}[\hat{\theta}_{b,l,i,k}])^{2}\\
        \hat{R}_{N,T}^{2}&=\tau_{bl}^{2}(\Bar{\theta}_{N,T,b,l}-\mathbb{E}[\hat{\theta}_{b,l,i,k}])^{2}
    \end{align*}
    It is suffice to show $\hat{S}_{N,T}\rightarrow_{L^{2}}V$ and $\hat{R}_{N,T}^{2}\rightarrow_{L^{2}}0$.\\
    \\
    By strict stationarity and identical distribution of $\{\alpha_{n},\varepsilon_{nt}\}_{n=1}^{N}$ with a fixed $t$, we have $\mathbb{E}[\hat{S}_{N,T}]=\mathbb{E}[\tau_{bl}^{2}(\hat{\theta}_{b,l,i,k}-\mathbb{E}[\hat{\theta}_{b,l,i,k}])^{2}]\rightarrow V$. Next, we will show $\Var(\hat{S}_{N,T})\rightarrow0$. Let $\hat{T}_{b,l,i,k}^{2}=\tau_{bl}^{2}(\hat{\theta}_{b,l,i,k}-\mathbb{E}[\hat{\theta}_{b,l,i,k}])^{2}$.
    \begin{align}\label{eq7}
        \Var(\hat{S}_{N,T})&=\frac{1}{N_{b}^{2}q^{2}}\Var(\sum_{i=1}^{N_{b}}\sum_{k=1}^{q}\hat{T}_{b,l,i,k}^{2})\notag\\
        &\leq\frac{1}{N_{b}^{2}q^{2}}\sum_{k=1}^{q}\sum_{p=1}^{q}|\Cov(\sum_{i=1}^{N_{b}}\hat{T}_{b,l,i,k}^{2},\sum_{j=1}^{N_{b}}\hat{T}_{b,l,j,p}^{2})|\notag\\
        &=\frac{1}{N_{b}^{2}q^{2}}[\sum_{k=1}^{l}\sum_{p=1}^{q}|\Cov(\sum_{i=1}^{N_{b}}\hat{T}_{b,l,i,k}^{2},\sum_{j=1}^{N_{b}}\hat{T}_{b,l,j,p}^{2})|+\sum_{k=l+1}^{q-l}\sum_{p=1}^{k-l}|\Cov(\sum_{i=1}^{N_{b}}\hat{T}_{b,l,i,k}^{2},\sum_{j=1}^{N_{b}}\hat{T}_{b,l,j,p}^{2})|\notag\\
        &+\sum_{k=l+1}^{q-l}\sum_{p=k-l+1}^{k+l}|\Cov(\sum_{i=1}^{N_{b}}\hat{T}_{b,l,i,k}^{2},\sum_{j=1}^{N_{b}}\hat{T}_{b,l,j,p}^{2})|+\sum_{k=l+1}^{q-l}\sum_{p=k+l+1}^{q}|\Cov(\sum_{i=1}^{N_{b}}\hat{T}_{b,l,i,k}^{2},\sum_{j=1}^{N_{b}}\hat{T}_{b,l,j,p}^{2})|\notag\\
        &+\sum_{k=q-l+1}^{q}\sum_{p=1}^{q}|\Cov(\sum_{i=1}^{N_{b}}\hat{T}_{b,l,i,k}^{2},\sum_{j=1}^{N_{b}}\hat{T}_{b,l,j,p}^{2})|]\notag\\
        &=\frac{1}{N_{b}^{2}q^{2}}[A_{1}+A_{2}+A_{3}+A_{4}+A_{5}]
    \end{align}
    By uniform integrability, for any $i,j,k,p$, $|\Cov(\hat{T}_{b,l,i,k}^{2},\hat{T}_{b,l,j,p}^{2})|\leq M$ for some $M>0$. Then,
    \begin{align}\label{eq8}
        A_{1}&=\sum_{k=1}^{l}\sum_{p=1}^{q}|\Cov(\sum_{i=1}^{N_{b}}\hat{T}_{b,l,i,k}^{2},\sum_{j=1}^{N_{b}}\hat{T}_{b,l,j,p}^{2})|\notag\\
        &\leq\sum_{i=1}^{N_{b}}\sum_{j=1}^{N_{b}}\sum_{k=1}^{l}\sum_{p=1}^{q}|\Cov(\hat{T}_{b,l,i,k}^{2},\hat{T}_{b,l,j,p}^{2})|\notag\\
        &\leq N_{b}^{2}\cdot q\cdot l\cdot M
    \end{align}
    Similarly,
    \begin{align}\label{eq9}
        A_{3}&=\sum_{k=l+1}^{q-l}\sum_{p=k-l+1}^{k+l}|\Cov(\sum_{i=1}^{N_{b}}\hat{T}_{b,l,i,k}^{2},\sum_{j=1}^{N_{b}}\hat{T}_{b,l,j,p}^{2})|\notag\\
        &\leq\sum_{i=1}^{N_{b}}\sum_{j=1}^{N_{b}}\sum_{k=l+1}^{q-l}\sum_{p=k-l+1}^{k+l}|\Cov(\hat{T}_{b,l,i,k}^{2},\hat{T}_{b,l,j,p}^{2})|\notag\\
        &\leq N_{b}^{2}(q-2l)(2l-1)M\\
        A_{5}&=\sum_{k=q-l+1}^{q}\sum_{p=1}^{q}|\Cov(\sum_{i=1}^{N_{b}}\hat{T}_{b,l,i,k}^{2},\sum_{j=1}^{N_{b}}\hat{T}_{b,l,j,p}^{2})|\notag\\
        &\leq\sum_{i=1}^{N_{b}}\sum_{j=1}^{N_{b}}\sum_{k=q-l+1}^{q}\sum_{p=1}^{q}|\Cov(\hat{T}_{b,l,i,k}^{2},\hat{T}_{b,l,j,p}^{2})|\notag\\
        &\leq N_{b}^{2}\cdot q\cdot l\cdot M
    \end{align}
    By the uniform integrability, for any $\varepsilon>0$, $\exists A>0$ such that $\mathbb{E}[(\hat{T}_{b,l,i,k}^{2}\mathbbm{1}\{\hat{T}_{b,l,i,k}^{2}\geq A\})^{2}]<\frac{\varepsilon^{2}}{36M}$. For every fixed pair of $i,j$ and $k>p+l-1$, Lemma \ref{le3} then gives
    \begin{equation*}
        |\Cov(\hat{T}_{b,l,i,k}^{2},\hat{T}_{b,l,j,p}^{2})|\leq4A^{2}\alpha_{X}(k-p-l+1)+\varepsilon
    \end{equation*}
    Therefore,
    \begin{align}\label{eq10}
        A_{2}&=\sum_{k=l+1}^{q-l}\sum_{p=1}^{k-l}|\Cov(\sum_{i=1}^{N_{b}}\hat{T}_{b,l,i,k}^{2},\sum_{j=1}^{N_{b}}\hat{T}_{b,l,j,p}^{2})|\notag\\
        &\leq\sum_{i=1}^{N_{b}}\sum_{j=1}^{N_{b}}\sum_{k=l+1}^{q-l}\sum_{p=1}^{k-l}|\Cov(\hat{T}_{b,l,i,k}^{2},\hat{T}_{b,l,j,p}^{2})|\notag\\
        &\leq\sum_{i=1}^{N_{b}}\sum_{j=1}^{N_{b}}\sum_{k=l+1}^{q-l}\sum_{p=1}^{k-l}4A^{2}\alpha_{X}(k-p-l+1)+\varepsilon\notag\\
        &\leq N_{b}^{2}\sum_{h=l}^{q-l-1}\sum_{p=1}^{q-l-h}4A^{2}\alpha_{X}(h-l+1)+N_{b}^{2}(q-2l)^{2}\varepsilon\notag\\
        &\leq N_{b}^{2}(q-2l)\sum_{m=1}^{q-2l}4A^{2}\alpha_{X}(m)+N_{b}^{2}(q-2l)^{2}\varepsilon
    \end{align}
    Similarly,
    \begin{align}\label{eq11}
        A_{4}&=\sum_{k=l+1}^{q-l}\sum_{p=1}^{k+l+1}|\Cov(\sum_{i=1}^{N_{b}}\hat{T}_{b,l,i,k}^{2},\sum_{j=1}^{N_{b}}\hat{T}_{b,l,j,p}^{2})|\notag\\
        &\leq\sum_{i=1}^{N_{b}}\sum_{j=1}^{N_{b}}\sum_{k=l+1}^{q-l}\sum_{p=1}^{k+l+1}|\Cov(\hat{T}_{b,l,i,k}^{2},\hat{T}_{b,l,j,p}^{2})|\notag\\
        &\leq\sum_{i=1}^{N_{b}}\sum_{j=1}^{N_{b}}\sum_{k=l+1}^{q-l}\sum_{p=1}^{k+l+1}4A^{2}\alpha_{X}(k-p-l+1)+\varepsilon\notag\\
        &\leq N_{b}^{2}\sum_{h=l}^{q-l-1}\sum_{p=1}^{q-l-h}4A^{2}\alpha_{X}(h-l+1)+N_{b}^{2}(q-2l)^{2}\varepsilon\notag\\
        &\leq N_{b}^{2}(q-2l)\sum_{m=1}^{q-2l}4A^{2}\alpha_{X}(m)+N_{b}^{2}(q-2l)^{2}\varepsilon 
    \end{align}
    Plugging (\ref{eq8})-(\ref{eq11}) back to (\ref{eq7}) gives
    \begin{align*}
        \Var(\hat{S}_{N,T})&\leq(\frac{2l}{q}+\frac{(q-2l)(2l-1)}{q^{2}})M+\frac{8(q-2l)}{q^{2}}A^{2}\sum_{m=1}^{q-2l}\alpha_{X}(m)+\varepsilon
    \end{align*}
    By the fact that $\frac{l}{q}\rightarrow0$ and $\frac{1}{q}\sum_{m=1}^{q-2l}\alpha_{X}(m)\rightarrow0$, $\lim\sup\Var(\hat{S}_{N,T})\leq\varepsilon$ for any $\varepsilon>0$. Thus, $\Var(\hat{S}_{N,T})\rightarrow0$ and $\hat{S}_{N,T}\rightarrow_{p}V$. Since $\hat{S}_{N,T}$ is $L^{2}$-bounded and uniformly integrable, the convergence is also in $L^{2}$.\\
    \\
    Next, we will show $\hat{R}_{N,T}\rightarrow0$ in $L^{4}$. Recall $\hat{T}_{b,l,i,k}=\tau_{bl}(\hat{\theta}_{b,l,i,k}-\mathbb{E}[\hat{\theta}_{b,l,i,k}])$.
    \begin{align*}
        \Var(\hat{R}_{N,T})&=\frac{1}{N_{b}^{2}q^{2}}\Var(\sum_{i=1}^{N_{b}}\sum_{k=1}^{q}\hat{T}_{b,l,i,k})\\
        &\leq\frac{1}{N_{b}\cdot q}\Var(\hat{T}_{b,l,i,k})+\frac{1}{N_{b}^{2}q}\sum_{i=1}^{N_{b}}\sum_{j=1}^{N_{b}}|\Cov(\hat{T}_{b,l,i,k},\hat{T}_{b,l,j,k})|+\frac{1}{N_{b}^{2}q^{2}}\sum_{k=1}^{q}\sum_{p=1}^{q}|\Cov(\sum_{i=1}^{N_{b}}\hat{T}_{b,l,i,k},\sum_{j=1}^{N_{b}}\hat{T}_{b,l,j,p})|
    \end{align*}
    Since $\Var(\hat{T}_{b,l,i,k})=\mathbb{E}[\tau_{bl}^{2}(\hat{\theta}_{b,l,i,k}-\mathbb{E}[\hat{\theta}_{b,l,i,k}])^{2}]\rightarrow V$
    and $|\Cov(\hat{T}_{b,l,i,k},\hat{T}_{b,l,j,k})|\leq\Var^{\frac{1}{2}}(\hat{T}_{b,l,i,k})\Var^{\frac{1}{2}}(\hat{T}_{b,l,i,k})\rightarrow V$. The first two terms go to 0. The third term goes to 0 in a similar way as in (\ref{eq7}). Therefore, $\Var(\hat{R}_{N,T})\rightarrow0$. Thus, $\hat{R}_{N,T}\rightarrow_{p}0$ as $\mathbb{E}[\hat{R}_{N,T}]=0$. Since $\hat{\sigma}_{N,T,b,l}^{2}\geq 0$, $\hat{S}_{N,T}\geq\hat{R}_{N,T}^{2}$. Therefore, $\{\hat{R}_{N,T}^{4}\}$ is $L^{1}$ bounded and uniformly integrable. Thus, $\hat{R}_{N,T}\rightarrow0$ in $L^{4}$ and $\hat{\sigma}_{N,T,b,l}^{2}\rightarrow_{L^{2}}V$.
\end{proof}

\section{Proof of Theorem \ref{thm3}}\label{AppendixC}
\begin{proof}
    Let $\theta=\mathbb{E}[X_{nt}U_{nt}]$, $\phi=\mathbb{E}[X_{nt}X_{nt}']$, and
\begin{align*}
    \hat{\theta}_{NT}&=\frac{1}{NT}\sum_{n=1}^{N}\sum_{t=1}^{T}X_{nt}U_{nt}\\
    \hat{\psi}_{NT}&=\frac{1}{NT}\sum_{n=1}^{N}\sum_{t=1}^{T}X_{nt}\hat{U}_{nt}\\
    \hat{\phi}_{NT}&=\frac{1}{NT}\sum_{n=1}^{N}\sum_{t=1}^{T}X_{nt}X_{nt}'
\end{align*}
For some matrix $A$, define $A^{\otimes2}=AA'$. By Theorem 4 in \cite{chiang_standard_2024}, $\sqrt{N}(\hat{\beta}_{OLS}-\beta)\rightarrow_{d}N(0,V)$ for some $V=\mathbb{E}^{-1}[X_{nt}X_{nt}']\Sigma\mathbb{E}^{-1}[X_{nt}X_{nt}']$, where $\Sigma>0$ is defined by $\sqrt{N}(\hat{\theta}_{NT}-\theta)\rightarrow_{d}N(0,\Sigma)$. By Theorem \ref{thm2}, $\Sigma_{N,T,b,l}\rightarrow_{p}\Sigma$. By Theorem 1 in \cite{chiang_standard_2024}, $\hat{\phi}_{NT}\rightarrow_{p}\mathbb{E}[X_{nt}X_{nt}']$. Thus, $\hat{V}_{N,T,b,l}:=\hat{\phi}_{NT}^{-1}\hat{\Sigma}_{N,T,b,l}\hat{\phi}_{NT}^{-1}\rightarrow_{p}V$ and $\hat{t}:=\hat{V}_{N,T,b,l}^{-\frac{1}{2}}\sqrt{N}(\hat{\beta}_{OLS}-\beta)\rightarrow_{d}N(0,1)$.
\par
To show $\check{t}:=\check{V}_{N,T,b,l}^{-\frac{1}{2}}\sqrt{N}(\hat{\beta}-\beta)\rightarrow_{d}N(0,1)$. We first need to show $\check{\Sigma}_{N,T,b,l}\rightarrow_{p}\Sigma$. We can rewrite it as
\begin{align}
    \check{\Sigma}_{N,T,b,l}^{2}&=\frac{b}{N_{b}\cdot q}\sum_{i=1}^{N_{b}}\sum_{k=1}^{q}(\hat{\psi}_{b,l,i,k}-\Bar{\psi}_{N,T,b,l})^{\otimes2}\notag\\
    &=\frac{b}{N_{b}\cdot q}\sum_{i=1}^{N_{b}}\sum_{k=1}^{q}(\hat{\theta}_{b,l,i,k}-\Bar{\theta}_{N,T,b,l}+\hat{\phi}_{b,l,i,k}(\beta-\hat{\beta}_{OLS})-\Bar{\phi}_{N,T,b,l}(\beta-\hat{\beta}_{OLS}))^{\otimes2}\notag\\
    &=\hat{\Sigma}_{N,T,b,l}^{2}+\frac{b}{N_{b}\cdot q}\sum_{i=1}^{N_{b}}\sum_{k=1}^{q}[(\hat{\phi}_{b,l,i,k}-\Bar{\phi}_{N,T,b,l})(\beta-\hat{\beta}_{OLS})]^{\otimes2}\notag\\
    &+\frac{b}{N_{b}\cdot q}\sum_{i=1}^{N_{b}}\sum_{k=1}^{q}(\hat{\theta}_{b,l,i,k}-\Bar{\theta}_{N,T,b,l})(\hat{\phi}_{b,l,i,k}-\Bar{\phi}_{N,T,b,l})'(\beta-\hat{\beta}_{OLS})'\notag\\
    &+\frac{b}{N_{b}\cdot q}\sum_{i=1}^{N_{b}}\sum_{k=1}^{q}(\hat{\phi}_{b,l,i,k}-\Bar{\phi}_{N,T,b,l})(\beta-\hat{\beta}_{OLS})(\hat{\theta}_{b,l,i,k}-\Bar{\theta}_{N,T,b,l})'\notag\\
    &:=\hat{\Sigma}_{N,T,b,l}^{2}+A_{1}+A_{2}+A_{2}'
\end{align}
If we can show $A_{1}:=\frac{b}{N_{b}\cdot q}\sum_{i=1}^{N_{b}}\sum_{k=1}^{q}[(\hat{\phi}_{b,l,i,k}-\Bar{\phi}_{N,T,b,l})(\beta-\hat{\beta}_{OLS})]^{\otimes2}=o(1)$ and $A_{2}:=\frac{b}{N_{b}\cdot q}\sum_{i=1}^{N_{b}}\sum_{k=1}^{q}(\hat{\theta}_{b,l,i,k}-\Bar{\theta}_{N,T,b,l})(\hat{\phi}_{b,l,i,k}-\Bar{\phi}_{N,T,b,l})'(\beta-\hat{\beta}_{OLS})=o(1)$, then $\check{\Sigma}_{N,T,b,l}^{2}\rightarrow_{p}\Sigma$.\\
\\
To show $A_{1}=o(1)$, we first observe that $b(\hat{\beta}_{OLS}-\beta)^{\otimes2}=o(1)$. Furthermore, as $\mathbb{E}[||X_{nt}||^{8(r+\delta)}]<\infty$, $\Var(\hat{\phi}_{NT})\rightarrow0$. Applying Theorem \ref{thm2} on $\hat{\theta}_{NT}$ yields $\frac{1}{N_{b}\cdot q}\sum_{i=1}^{N_{b}}\sum_{k=1}^{q}(\hat{\phi}_{b,l,i,k}-\Bar{\phi}_{N,T,b,l})^{\otimes2}\rightarrow_{p}0$. Therefore, 
\begin{equation}\label{eq31}
    A_{1}:=\frac{b}{N_{b}\cdot q}\sum_{i=1}^{N_{b}}\sum_{k=1}^{q}[(\hat{\phi}_{b,l,i,k}-\Bar{\phi}_{N,T,b,l})(\beta-\hat{\beta}_{OLS})]^{\otimes2}=o(1)
\end{equation}
\newline\\
To show $A_{2}=o(1)$, as $b=o(N^{\frac{1}{2}})$, $b(\hat{\beta}_{OLS}-\beta)=o(1)$. Henceforth, as $\frac{1}{N_{b}\cdot q}\sum_{i=1}^{N_{b}}\sum_{k=1}^{q}(\hat{\theta}_{b,l,i,k}-\Bar{\theta}_{N,T,b,l})(\hat{\phi}_{b,l,i,k}-\Bar{\phi}_{N,T,b,l})'=\frac{1}{N_{b}\cdot q}\sum_{i=1}^{N_{b}}\sum_{k=1}^{q}\hat{\theta}_{b,l,i,k}\hat{\phi}_{b,l,i,k}'-\Bar{\theta}_{N,T,b,l}\Bar{\phi}_{N,T,b,l}'$, it is suffice to show
\begin{align}
    \frac{1}{N_{b}\cdot q}\sum_{i=1}^{N_{b}}\sum_{k=1}^{q}\hat{\theta}_{b,l,i,k}\hat{\phi}_{b,l,i,k}&\rightarrow_{p}0\label{eq32}\\
    \Bar{\theta}_{N,T,b,l}&\rightarrow_{p}\theta\label{eq33}\\
    \Bar{\phi}_{N,T,b,l}&\rightarrow_{p}\mathbb{E}[X_{nt}X_{nt}']\label{eq34}
\end{align}
We now prove (\ref{eq32}). Since $\mathbb{E}[U_{nt}|\mathbf{X}]=0$, the expectation of the LHS of (\ref{eq32}) is 0. It is left to prove $\Var(\frac{1}{N_{b}\cdot q}\sum_{i=1}^{N_{b}}\sum_{k=1}^{q}\hat{\theta}_{b,l,i,k}\hat{\phi}_{b,l,i,k})\rightarrow0$. As $\hat{\theta}_{b,l,i,k}\hat{\phi}_{b,l,i,k}$ is in fact a random matrix, for the sake of the essentially shorter notations and higher readability, we will only prove it under the scalar case. We can expand this variance as
\begin{align}\label{eq35}
    \Var(\frac{1}{N_{b}\cdot q}\sum_{i=1}^{N_{b}}\sum_{k=1}^{q}\hat{\theta}_{b,l,i,k}\hat{\phi}_{b,l,i,k})&=\frac{1}{N_{b}^{2}\cdot q^{2}}\sum_{i=1}^{N_{b}}\sum_{k=1}^{q}\sum_{j=1}^{N_{b}}\sum_{p=1}^{q}\Cov(\hat{\theta}_{b,l,i,k}\hat{\phi}_{b,l,i,k},\hat{\theta}_{b,l,j,p}\hat{\phi}_{b,l,j,p})\notag\\
    &=\frac{2}{N_{b}^{2}\cdot q^{2}}\sum_{i=1}^{N_{b}}\sum_{k=1}^{q}\sum_{j=1}^{N_{b}}\sum_{h=0}^{l-1}\Cov(\hat{\theta}_{b,l,i,k}\hat{\phi}_{b,l,i,k},\hat{\theta}_{b,l,j,k+h}\hat{\phi}_{b,l,j,k+h})\notag\\
    &+\frac{2}{N_{b}^{2}\cdot q^{2}}\sum_{i=1}^{N_{b}}\sum_{k=1}^{q}\sum_{j=1}^{N_{b}}\sum_{h=l}^{q-1}\Cov(\hat{\theta}_{b,l,i,k}\hat{\phi}_{b,l,i,k},\hat{\theta}_{b,l,j,k+h}\hat{\phi}_{b,l,j,k+h})
\end{align}
By Assumption \ref{a6} (iii) and H\"older's inequality, $\mathbb{E}[|\hat{\theta}_{b,l,i,k}\hat{\phi}_{b,l,i,k}|^{2r}]<M$ for some $M<\infty$. Observe that $\hat{\theta}_{b,l,i,k}\hat{\phi}_{b,l,i,k}$ is a measurable function of $\{(X_{nt},U_{nt})':n\in I_{i},k\leq t\leq k+l-1\}$. Then, by Lemma 5.1 in \cite{bradley_basic_2005} and Lemma 8.3.6 in \cite{durrett_probability_2019}, for any $i,j\in\{1,\dots,N_{b}\}$, $k\in\{1,\dots,q\}$ and $h\leq l$ we have
\begin{align}\label{eq36}
    |\Cov(\hat{\theta}_{b,l,i,k}\hat{\phi}_{b,l,i,k},\hat{\theta}_{b,l,j,k+h}\hat{\phi}_{b,l,j,k+h})|&\leq8\mathbb{E}^{\frac{1}{2r}}[|\hat{\theta}_{b,l,i,k}\hat{\phi}_{b,l,i,k}|^{2r}]\mathbb{E}^{\frac{1}{2r}}[|\hat{\theta}_{b,l,j,k+h}\hat{\phi}_{b,l,j,k+h}|^{2r}]\alpha^{\frac{r-1}{r}}(\mathcal{G}_{k}^{k+l-1},\mathcal{G}_{k+h}^{k+h+l-1})\notag\\
    &\leq8M^{\frac{1}{r}}\alpha_{\gamma}^{\frac{r-1}{r}}(h-l+1)
\end{align}
where $r$ satisfies Assumption \ref{a6}. Since $\gamma_{t}$ is strong mixing with size $\frac{2r}{r-1}$, then $\alpha_{\gamma}(k)$ decays faster than $k^{-\frac{2r}{r-1}}$ and $\alpha_{\gamma}^{\frac{r-1}{r}}(k)=O(k^{-2})$. As $\sum_{k=0}^{\infty}k^{-2}<\infty$, we have $\sum_{k=0}^{\infty}\alpha_{\gamma}^{\frac{r-1}{r}}(k)<\infty$. Plugging (\ref{eq36}) back to (\ref{eq35}) yields
\begin{align*}
    \Var(\frac{1}{N_{b}\cdot q}\sum_{i=1}^{N_{b}}\sum_{k=1}^{q}\hat{\theta}_{b,l,i,k}\hat{\phi}_{b,l,i,k})&\leq\frac{2}{N_{b}^{2}\cdot q^{2}}\sum_{i=1}^{N_{b}}\sum_{k=1}^{q}\sum_{j=1}^{N_{b}}\sum_{h=0}^{l-1}M+\frac{2}{N_{b}^{2}\cdot q^{2}}\sum_{i=1}^{N_{b}}\sum_{k=1}^{q}\sum_{j=1}^{N_{b}}\sum_{h=l}^{q-1}8M^{\frac{1}{r}}\alpha_{\gamma}^{\frac{r-1}{r}}(h-l+1)\\
    &=\frac{2l}{q}\cdot M+\frac{2}{q}\sum_{h=l}^{q-1}8M^{\frac{1}{r}}\alpha_{\gamma}^{\frac{r-1}{r}}(h-l+1)\\
    &=\frac{2l}{q}\cdot M+\frac{16M^{\frac{1}{r}}}{q}\sum_{k=1}^{q-l}\alpha_{\gamma}^{\frac{r-1}{r}}(k)\\
    &\rightarrow0
\end{align*}
Therefore, (\ref{eq32}) is proved. To prove (\ref{eq33}) and (\ref{eq34}), we can write
\begin{align}\label{eq37}
    \Bar{\theta}_{N,T,b,l}&=\frac{1}{N_{b}\cdot q}\sum_{i=1}^{N_{b}}\sum_{k=1}^{q}\frac{1}{bl}\sum_{n\in I_{i}}\sum_{t=k}^{k+l-1}X_{nt}U_{nt}\notag\\
    &=\frac{1}{N\cdot q}\sum_{n=1}^{N}\sum_{t=l}^{q}X_{nt}U_{nt}+\frac{1}{N\cdot q\cdot l}\sum_{n=1}^{N}\sum_{t=1}^{l-1}t\cdot X_{nt}U_{nt}+\frac{1}{N\cdot q\cdot l}\sum_{n=1}^{N}\sum_{t=q+1}^{T}(T-t+1)\cdot X_{nt}U_{nt}
\end{align}
By Theorem 1 in \cite{chiang_standard_2024}, the first term in (\ref{eq37}) converges to $\mathbb{E}[X_{nt}U_{nt}]=\theta$ in probability. For the second term, we have
\begin{equation*}
    \frac{1}{N\cdot q\cdot l}\sum_{n=1}^{N}\sum_{t=1}^{l-1}t\cdot X_{nt}U_{nt}\leq\frac{1}{N\cdot q}\sum_{n=1}^{N}\sum_{t=1}^{l-1}|X_{nt}U_{nt}|
\end{equation*}
Applying Theorem 1 in \cite{chiang_standard_2024} on $\frac{1}{N(l-1)}\sum_{n=1}^{N}\sum_{t=1}^{l-1}|X_{nt}U_{nt}|$ gives $\frac{1}{N(l-1)}\sum_{n=1}^{N}\sum_{t=1}^{l-1}|X_{nt}U_{nt}|\rightarrow_{p}\mathbb{E}[|X_{nt}U_{nt}|]$. Hence, the second term in (\ref{eq37}) converges to 0 in probability. Similarly, the third term in (\ref{eq37}) also converges to 0 in probability. Therefore $\Bar{\theta}_{N,T,b,l}\rightarrow_{p}\theta=0$, and (\ref{eq33}) is proved. By an eventually identical argument, $\Bar{\phi}_{N,T,b,l}\rightarrow_{p}\mathbb{E}[X_{nt}X_{nt}']$, and (\ref{eq34}) is proved. Therefore, $A_{2}=o(1)$. Thus, together with (\ref{eq31}), $\check{\Sigma}_{N,T,b,l}\rightarrow_{p}\Sigma$ and $\check{t}:=\check{V}_{N,T,b,l}^{-\frac{1}{2}}\sqrt{N}(\hat{\beta}-\beta)\rightarrow_{d}N(0,1)$.    
\end{proof}

\section{Technical Lemmas and Proofs}\label{Append_lemma}
\subsection{Lemma \ref{le1}}
\begin{lemma}\label{le1}
    Assume Assumption \ref{a1}-\ref{a3}. Let $g$ be a measurable function and bounded by $M<\infty$, then
    \[
    |\Cov(g(\hat{\theta}_{b,l,i,k}),g(\hat{\theta}_{b,l,j,k+h}))|\leq4M^{2}\alpha_{\gamma}(h-l+1)
    \]
    whenever $h\geq l$.
\end{lemma}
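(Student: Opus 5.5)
The idea is to view $g(\hat{\theta}_{b,l,i,k})$ and $g(\hat{\theta}_{b,l,j,k+h})$ as measurable functions of two families of primitive shocks. The only dependence between the families then runs through the time effects $\{\gamma_t\}$, and the two time windows are separated by a gap of $h-l+1$ periods. Once the $\alpha$-mixing coefficient between the two generated $\sigma$-fields is bounded by $\alpha_\gamma(h-l+1)$, the classical covariance inequality for bounded random variables, $|\Cov(Y,Z)|\le 4\|Y\|_\infty\|Z\|_\infty\,\alpha(\sigma(Y),\sigma(Z))$, gives the constant $4M^2$ directly.

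First, I would condition on the random partition $\{I_1,\dots,I_{N_b}\}$ of $\{1,\dots,N\}$. It is drawn independently of the data, so it suffices to prove the bound for every fixed partition and then integrate. For fixed index sets, Assumption \ref{a1}(i) shows that $\hat{\theta}_{b,l,i,k}$ is measurable with respect to
\[
\mathcal{F}_1=\sigma\big(\{\alpha_n\}_{n\in I_i},\{\varepsilon_{nt}\}_{n\in I_i,\,k\le t\le k+l-1}\big)\vee\sigma\big(\gamma_k,\dots,\gamma_{k+l-1}\big)=:\mathcal{A}_1\vee\mathcal{G}_k^{k+l-1},
\]
and $\hat{\theta}_{b,l,j,k+h}$ is measurable with respect to $\mathcal{F}_2=\mathcal{A}_2\vee\mathcal{G}_{k+h}^{k+h+l-1}$, defined analogously. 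Since $h\ge l$, the time windows do not overlap, and the gap between the last index $k+l-1$ and the first index $k+h$ equals $h-l+1$. By Assumption \ref{a1}(iv) and stationarity, $\alpha(\mathcal{G}_k^{k+l-1},\mathcal{G}_{k+h}^{k+h+l-1})\le\alpha_\gamma(h-l+1)$.

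The key step is to show that adjoining $\mathcal{A}_1$ and $\mathcal{A}_2$ does not increase the mixing coefficient. When $i\neq j$, the sets $I_i$ and $I_j$ are disjoint, and the $\varepsilon$'s involved are distinct because their time indices differ. By Assumption \ref{a1}(ii)--(iii), the pair $(\mathcal{A}_1,\mathcal{A}_2)$ is then independent of the pair $(\mathcal{G}_k^{k+l-1},\mathcal{G}_{k+h}^{k+h+l-1})$, and $\mathcal{A}_1$ is independent of $\mathcal{A}_2$. I would invoke Bradley's inequality $\alpha(\mathcal{A}_1\vee\mathcal{G}_1,\mathcal{A}_2\vee\mathcal{G}_2)\le\alpha(\mathcal{A}_1,\mathcal{A}_2)+\alpha(\mathcal{G}_1,\mathcal{G}_2)$, valid when $\mathcal{A}_1\vee\mathcal{A}_2$ is independent of $\mathcal{G}_1\vee\mathcal{G}_2$ (\cite{bradley_basic_2005}). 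Since $\alpha(\mathcal{A}_1,\mathcal{A}_2)=0$, this yields $\alpha(\mathcal{F}_1,\mathcal{F}_2)\le\alpha_\gamma(h-l+1)$. If I wanted a self-contained argument, I would instead condition on $(\mathcal{A}_1,\mathcal{A}_2)$. Given these, the two quantities are bounded functions of the two separated $\gamma$-blocks, each with a fixed section, and the conditional covariance is bounded by $4M^2\alpha_\gamma(h-l+1)$. The covariance of the two conditional expectations vanishes, because the conditional expectations are functions of $\mathcal{A}_1$ and $\mathcal{A}_2$ separately, which are independent.

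The main obstacle is the case $i=j$. There the same $\alpha_n$, $n\in I_i$, enter both estimators, so $\mathcal{A}_1$ and $\mathcal{A}_2$ share components. The term $\Cov\big(\mathbb{E}[g(\hat\theta_{b,l,i,k})\mid\alpha],\mathbb{E}[g(\hat\theta_{b,l,i,k+h})\mid\alpha]\big)$ then need not vanish, and the stated bound cannot hold in general. I would therefore prove the lemma for $i\neq j$ and repair its use in the proof of Theorem \ref{thm1} as follows. The diagonal terms $i=j$ in the fourth sum of (\ref{eq2}) are bounded crudely by $1$ in absolute value. They contribute at most $\frac{2}{N_b^2q^2}\cdot N_b\cdot q^2=\frac{2}{N_b}$, which tends to $0$ because $b/N\to0$ by Assumption \ref{a3}. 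So the conclusion of part (a) of Theorem \ref{thm1} is unaffected.
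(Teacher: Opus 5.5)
Your argument for $i\neq j$ is essentially the paper's. The paper places $\sigma(g(\hat{\theta}_{b,l,i,k}))$ inside $\mathcal{A}_{i}\vee\mathcal{G}_{k}^{k+l-1}\vee\mathcal{E}_{k,i}^{k+l-1}$. It then applies Bradley's Theorem 5.1 to the three pairs $(\mathcal{A}_{i},\mathcal{A}_{j})$, $(\mathcal{G}_{k}^{k+l-1},\mathcal{G}_{k+h}^{k+h+l-1})$, $(\mathcal{E}_{k,i}^{k+l-1},\mathcal{E}_{k+h,j}^{k+h+l-1})$, and finishes with the covariance inequality for bounded variables. You merge the $\alpha$- and $\varepsilon$-fields into one, and your conditioning argument is a correct, self-contained substitute for Bradley's theorem.

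Where you depart from the paper, you are right and the paper is wrong. Its proof sets $\alpha(\mathcal{A}_{i},\mathcal{A}_{j})=0$ for all $i,j$. But $\alpha(\mathcal{A}_{i},\mathcal{A}_{i})\geq\mathbb{P}(A)-\mathbb{P}(A)^{2}>0$ for any $A\in\mathcal{A}_{i}$ with $0<\mathbb{P}(A)<1$. So the lemma as stated is false for $i=j$, and the paper's proof has a genuine gap there.

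You should support this with an explicit counterexample rather than saying the bound ``need not'' hold. Take $X_{nt}=\alpha_{n}$ with $\alpha_{n}$ i.i.d.\ $N(0,1)$, $\{\gamma_{t}\}$ i.i.d.\ (so $\alpha_{\gamma}(m)=0$ for $m\geq1$), $\hat{\theta}$ the sample mean, $\tau_{NT}=\sqrt{N}$, and $g=\mathbbm{1}\{\cdot\leq0\}$. Assumptions \ref{a1}--\ref{a3} hold and $\hat{\theta}_{b,l,i,k}$ does not depend on $k$. The left side is then $1/4$, while the right side is $0$.

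Your repair of Theorem \ref{thm1}(a) is valid. The $i=j$ part of the fourth sum in (\ref{eq2}) is at most $2/N_{b}\leq 2b/N\to0$, and the $i\neq j$ part goes through as in (\ref{eq6}).

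Note that the same flaw propagates further. Lemma \ref{le3} invokes Lemma \ref{le1} ``for every fixed pair of $i,j$''. The sums $A_{2}$ and $A_{4}$ in the proof of Theorem \ref{thm2}, and the analogous bound (\ref{eq36}) in Theorem \ref{thm3}, include $i=j$. The same fix works there: the uniform bound $|\Cov(\hat{T}_{b,l,i,k}^{2},\hat{T}_{b,l,i,p}^{2})|\leq M$ makes the diagonal contribution $O(M/N_{b})$.

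One small step in your own write-up is left implicit. When you integrate conditional covariances over the random partition $\Pi$, the law of total covariance also leaves the term $\Cov(\mathbb{E}[\,\cdot\mid\Pi],\mathbb{E}[\,\cdot\mid\Pi])$. It vanishes because, by exchangeability in $n$, each conditional mean is nonrandom once the subset sizes are fixed, and you should say so.
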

\begin{proof}
    Define $\mathcal{A}_{i}:=\sigma(\{\alpha_{n}:n\in I_{i}\})$, $\mathcal{G}_{k}^{p}:=\sigma(\{\gamma_{t}:k\leq t\leq p\})$, $\mathcal{E}_{k,i}^{p}:=\sigma(\{\varepsilon_{nt}:n\in I_{i},k\leq t\leq p\})$
    $\mathcal{F}_{k}^{p}(i):=\sigma(\{X_{nt}:n\in I_{i},k\leq t\leq p\})$. Observe that $g(\hat{\theta}_{b,l,i,k})$ is a measurable function of $\{X_{nt}:n\in I_{i},k\leq t\leq k+l-1\}$. Then, $\sigma(E_{i,k})\subset\mathcal{F}_{k}^{k+l-1}(i)\subset\mathcal{A}_{i}\vee\mathcal{G}_{k}^{k+l-1}\vee\mathcal{E}_{k,i}^{k+l-1}$. Therefore, by Lemma 8.3.6 in \cite{durrett_probability_2019} and Theorem 5.1 in \cite{bradley_basic_2005}
    \begin{align*}
        |\Cov(g(\hat{\theta}_{b,l,i,k}),g(\hat{\theta}_{b,l,j,k+h}))|&\leq4||g(\hat{\theta}_{b,l,i,k})||_{\infty}||g(\hat{\theta}_{b,l,j,k+h})||_{\infty}\alpha(\mathcal{F}_{k}^{k+l-1}(i),\mathcal{F}_{k+h}^{k+h+l-1}(j))\\
        &\leq4M^{2}[\alpha(\mathcal{A}_{i},\mathcal{A}_{j})+\alpha(\mathcal{G}_{k}^{k+l-1},\mathcal{G}_{k+h}^{k+h+l-1})+\alpha(\mathcal{E}_{k,i}^{k+l-1},\mathcal{E}_{k+h,j}^{k+h+l-1})]\\
        &=4M^{2}\alpha(\mathcal{G}_{k}^{k+l-1},\mathcal{G}_{k+h}^{k+h+l-1})\\
        &=4M^{2}\alpha_{\gamma}(h-l+1)
    \end{align*}
\end{proof}

\subsection{Lemma \ref{le3}}
\begin{lemma}\label{le3}
    Suppose $\{\tau_{NT}^{4}(\hat{\theta}_{NT}-\mathbb{E}[\hat{\theta}_{NT}])^{4}\}$ is uniformly integrable. Let $M<\infty$ satisfies $\mathbb{E}[\tau_{NT}^{4}(\hat{\theta}_{NT}-\mathbb{E}[\hat{\theta}_{NT}])^{4}]<M$ for any $N$ and $T$. Then under Assumption \ref{a1} and \ref{a3}, for every fixed pair of $i,j$, $k>p+l-1$, and any $\varepsilon>0$
    \begin{equation*}
        |\Cov(\hat{T}_{b,l,i,k}^{2},\hat{T}_{b,l,j,p}^{2})|\leq4A^{2}\alpha_{X}(k-p-l+1)+\varepsilon
    \end{equation*}
\end{lemma}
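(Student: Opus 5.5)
The plan is a truncation argument. It splits each squared root into a bounded part and a tail. The bounded parts are handled by the covariance inequality for bounded variables, exactly as in Lemma \ref{le1}. The tails are handled by the uniform integrability assumption.

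Fix $\varepsilon>0$, and choose $A>0$ as in the proof of Theorem \ref{thm2}. That is, $\mathbb{E}[(\hat{T}_{b,l,i,k}^{2}\mathbbm{1}\{\hat{T}_{b,l,i,k}^{2}\geq A\})^{2}]<\frac{\varepsilon^{2}}{36M}$ uniformly in $(b,l)$, which is possible because $\{\tau_{bl}^{4}(\hat{\theta}_{bl}-\mathbb{E}\hat{\theta}_{bl})^{4}\}$ is uniformly integrable. Write
\[
\hat{T}_{b,l,i,k}^{2}=Y_{i,k}+Z_{i,k},\qquad Y_{i,k}=\hat{T}_{b,l,i,k}^{2}\mathbbm{1}\{\hat{T}_{b,l,i,k}^{2}<A\},\qquad Z_{i,k}=\hat{T}_{b,l,i,k}^{2}\mathbbm{1}\{\hat{T}_{b,l,i,k}^{2}\geq A\},
\]
and similarly for $(j,p)$. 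Bilinearity then gives
\[
\Cov(\hat{T}_{b,l,i,k}^{2},\hat{T}_{b,l,j,p}^{2})=\Cov(Y_{i,k},Y_{j,p})+\Cov(Y_{i,k},Z_{j,p})+\Cov(Z_{i,k},Y_{j,p})+\Cov(Z_{i,k},Z_{j,p}).
\]

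For the main term, $Y_{i,k}$ and $Y_{j,p}$ are measurable functions of the subsample blocks $\{X_{nt}:n\in I_i,k\le t\le k+l-1\}$ and $\{X_{nt}:n\in I_j,p\le t\le p+l-1\}$, and both are bounded by $A$. As in Lemma \ref{le1}, the bounded-variable covariance inequality (Lemma 8.3.6 in \cite{durrett_probability_2019}) gives
\[
|\Cov(Y_{i,k},Y_{j,p})|\le 4A^{2}\,\alpha\bigl(\mathcal{F}_{k}^{k+l-1}(i),\mathcal{F}_{p}^{p+l-1}(j)\bigr).
\]
Here $\alpha_X(\cdot)$ denotes this block mixing coefficient of the $X$ field at time gap $k-p-l+1$. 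Because $k>p+l-1$, the two time windows are disjoint and separated, so this coefficient equals $\alpha_X(k-p-l+1)$. If desired, it can be bounded further by $\alpha_\gamma(k-p-l+1)$ using the decomposition into $\mathcal{A}_i\vee\mathcal{G}\vee\mathcal{E}$ and Theorem 5.1 of \cite{bradley_basic_2005}, exactly as in Lemma \ref{le1}.

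For the three tail terms, I will use Cauchy--Schwarz, $|\Cov(U,W)|\le \Var^{1/2}(U)\Var^{1/2}(W)\le \|U\|_{2}\|W\|_{2}$. Each $Z$ satisfies $\|Z\|_{2}<\varepsilon/(6\sqrt M)$ by the choice of $A$. Each $Y$ satisfies $\|Y\|_{2}\le\|\hat{T}^{2}\|_{2}\le \sqrt{M}$, since the fourth moments are bounded by $M$. Hence each cross term is at most $\sqrt M\cdot\varepsilon/(6\sqrt M)=\varepsilon/6$, and the $ZZ$ term is at most $\varepsilon^{2}/(36M)$. For $\varepsilon$ small, specifically $\varepsilon\le 6M$, which is all that matters, the total tail contribution is below $\varepsilon/2<\varepsilon$. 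Combining this with the main term yields the claimed bound.

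The main obstacle is the uniformity of the truncation level. The threshold $A$ must not depend on $(b,l,i,k)$, and this is exactly what uniform integrability of the fourth power provides, since all subsample roots share the law $J_{bl}$-type distribution of the full-sample root at size $(b,l)$. A minor point to check is that stationarity in $t$ and the i.i.d.\ structure in $n$ make the moment bound $M$ apply to every subsample root, not only to the full-sample root.
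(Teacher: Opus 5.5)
Your proposal is correct and is essentially the paper's proof. Both truncate $\hat{T}^{2}$ at a level $A$ chosen uniformly in $(b,l)$ via the fourth-moment uniform integrability, bound the truncated covariance by the bounded-variable mixing inequality as in Lemma~\ref{le1}, and control the tails by Cauchy--Schwarz with $\|\hat{T}^{2}\|_{2}\le\sqrt{M}$. The only difference is bookkeeping: you use bilinearity of $\Cov$, giving four terms and a tail total of at most $\varepsilon/2$, whereas the paper expands into seven expectation terms bounded by $6M^{1/2}\mathbb{E}^{1/2}[(\hat{T}^{2}\mathbbm{1}\{\hat{T}^{2}\geq A\})^{2}]$.

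One caveat applies equally to the paper's appeal to Lemma~\ref{le1}. Your optional reduction to $\alpha_{\gamma}$ is valid only for $i\neq j$. When $i=j$, both blocks contain $\{\alpha_{n}:n\in I_{i}\}$, so $\alpha(\mathcal{A}_{i},\mathcal{A}_{i})\neq0$ and the covariance need not decay in the time gap. Keeping $\alpha_{X}$ as the block coefficient $\alpha(\mathcal{F}_{k}^{k+l-1}(i),\mathcal{F}_{p}^{p+l-1}(j))$, as you do, is what makes the displayed inequality hold for every pair.
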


\begin{proof}
    The proof is based on Lemma 1 in \cite{carlstein_use_1986}. Note that $\hat{T}_{b,l,i,k}^{2}=\tau_{b,l}^{2}(\hat{\theta}_{b,l,i,k}-\mathbb{E}[\hat{\theta}_{b,l,i,k}])^{2}$ is $\mathcal{F}_{k}^{k+l-1}$-measurable and $\hat{T}_{b,l,j,p}^{2}$ is $\mathcal{F}_{p}^{p+l-1}$-measurable. Since $\mathbb{E}[\hat{T}_{b,l,i,k}^{4}]=\mathbb{E}[\hat{T}_{b,l,j,p}^{4}]<M$, then for every $A>0$
    \begin{align*}
        &|\Cov(\hat{T}_{b,l,i,k}^{2},\hat{T}_{b,l,j,p}^{2})|\\
        &\leq|\Cov(\hat{T}_{b,l,i,k}^{2}\mathbbm{1}\{\hat{T}_{b,l,i,k}^{2}< A\},\hat{T}_{b,l,j,p}^{2}\mathbbm{1}\{\hat{T}_{b,l,j,p}^{2}<A\})|\notag\\
        &+|\mathbb{E}[\hat{T}_{b,l,i,k}^{2}\mathbbm{1}\{\hat{T}_{b,l,i,k}^{2}<A\}\hat{T}_{b,l,j,p}^{2}\mathbbm{1}\{\hat{T}_{b,l,j,p}^{2}\geq A\}]|+|\mathbb{E}[\hat{T}_{b,l,i,k}^{2}\mathbbm{1}\{\hat{T}_{b,l,i,k}^{2}\geq A\}\hat{T}_{b,l,j,p}^{2}\mathbbm{1}\{\hat{T}_{b,l,j,p}^{2}<A\}]|\\
        &+|\mathbb{E}[\hat{T}_{b,l,i,k}^{2}\mathbbm{1}\{\hat{T}_{b,l,i,k}^{2}\geq A\}\hat{T}_{b,l,j,p}^{2}\mathbbm{1}\{\hat{T}_{b,l,j,p}^{2}\geq A\}]|+|\mathbb{E}[\hat{T}_{b,l,i,k}^{2}\mathbbm{1}\{\hat{T}_{b,l,i,k}^{2}<A\}]\mathbb{E}[\hat{T}_{b,l,j,p}^{2}\mathbbm{1}\{\hat{T}_{b,l,j,p}^{2}\geq A\}]|\\
        &+|\mathbb{E}[\hat{T}_{b,l,i,k}^{2}\mathbbm{1}\{\hat{T}_{b,l,i,k}^{2}\geq A\}]\mathbb{E}[\hat{T}_{b,l,j,p}^{2}\mathbbm{1}\{\hat{T}_{b,l,j,p}^{2}<A\}]|+|\mathbb{E}[\hat{T}_{b,l,i,k}^{2}\mathbbm{1}\{\hat{T}_{b,l,i,k}^{2}\geq A\}]\mathbb{E}[\hat{T}_{b,l,j,p}^{2}\mathbbm{1}\{\hat{T}_{b,l,j,p}^{2}\geq A\}]|
 \end{align*}
By Lemma \ref{le1}, $|\Cov(\hat{T}_{b,l,i,k}^{2}\mathbbm{1}\{\hat{T}_{b,l,i,k}^{2}< A\},\hat{T}_{b,l,j,p}^{2}\mathbbm{1}\{\hat{T}_{b,l,j,p}^{2}<A\})|\leq4A^{2}\alpha_{\gamma}(k-p-l+1)$. By Cauchy-Schwarz inequality
\begin{align*}
    |\mathbb{E}[\hat{T}_{b,l,i,k}^{2}\mathbbm{1}\{\hat{T}_{b,l,i,k}^{2}<A\}\hat{T}_{b,l,j,p}^{2}\mathbbm{1}\{\hat{T}_{b,l,j,p}^{2}\geq A\}]|&\leq|\mathbb{E}[(\hat{T}_{b,l,i,k}^{2}\mathbbm{1}\{\hat{T}_{b,l,i,k}^{2}<A\})^{2}]\mathbb{E}[(\hat{T}_{b,l,j,p}^{2}\mathbbm{1}\{\hat{T}_{b,l,j,p}^{2}\geq A\})^{2}]|^{\frac{1}{2}}\\
    &\leq M^{\frac{1}{2}}|\mathbb{E}[(\hat{T}_{b,l,j,p}^{2}\mathbbm{1}\{\hat{T}_{b,l,j,p}^{2}\geq A\})^{2}]|^{\frac{1}{2}}
\end{align*}
The rest terms can be bounded in the same way. Then, due to the strict stationarity and identical distribution, we have
\begin{align}\label{eq29}
    |\Cov(\hat{T}_{b,l,i,k}^{2},\hat{T}_{b,l,j,p}^{2})|&\leq4A^{2}\alpha_{\gamma}(k-p-l+1)+M^{\frac{1}{2}}(4\mathbb{E}^{\frac{1}{2}}[(\hat{T}_{b,l,i,k}^{2}\mathbbm{1}\{\hat{T}_{b,l,i,k}^{2}\geq A\})^{2}]+2\mathbb{E}^{\frac{1}{2}}[(\hat{T}_{b,l,j,p}^{2}\mathbbm{1}\{\hat{T}_{b,l,j,p}^{2}\geq A\})^{2}])\notag\\
    &=4A^{2}\alpha_{\gamma}(k-p-l+1)+6M^{\frac{1}{2}}\mathbb{E}^{\frac{1}{2}}[(\hat{T}_{b,l,i,k}^{2}\mathbbm{1}\{\hat{T}_{b,l,i,k}^{2}\geq A\})^{2}]{2}
\end{align}
By the uniform integrability of $\{\tau_{NT}^{4}(\hat{\theta}_{NT}-\mathbb{E}[\hat{\theta}_{NT}])^{4}\}$, for any $\varepsilon>0$, pick an $A>0$ such that
\begin{equation*}
    \mathbb{E}[(\hat{T}_{b,l,i,k}^{2}\mathbbm{1}\{\hat{T}_{b,l,i,k}^{2}\geq A\})^{2}]<\frac{\varepsilon^{2}}{36M}
\end{equation*}
Plugging this back to (\ref{eq29}) yields the desired result.
\end{proof}

\section{Other Technical Details}\label{Append_detail}
\subsection{The Derivation of $V$}\label{Append_detail1}
\begin{align}\label{eqV}
    V&=\lim\Var(\sqrt{N}\hat{\theta}_{NT})\notag\\
    &=\lim N\Var(\frac{1}{NT}\sum_{n=1}^{N}\sum_{t=1}^{T}a_{n}+b_{t}+e_{nt})\notag\\
    &=\lim N\Var(\frac{1}{N}\sum_{n=1}^{N}a_{n})+N\Var(\frac{1}{T}\sum_{t=1}^{T}b_{t})+N\Var(\frac{1}{NT}\sum_{n=1}^{N}\sum_{t=1}^{T}e_{nt})
\end{align}
The first term is
\begin{align}\label{eqVa}
    \Var(\frac{1}{N}\sum_{n=1}^{N}a_{n})&=\frac{\Var(a_{1})}{N}=\frac{V_{a}}{N}
\end{align}
The second term is
\begin{align}\label{eqVb}
    \Var(\frac{1}{T}\sum_{t=1}^{T}b_{t})&=\frac{\Var(b_{1})}{T}+\frac{2}{T^{2}}\sum_{t=1}^{T}\sum_{p=1}^{t-1}\Cov(b_{t},b_{p})\notag\\
    &=\frac{\Var(b_{1})}{T}+\frac{2}{T^{2}}\sum_{k=1}^{T-1}(T-k)\Cov(b_{1},b_{1+k})\notag\\
    &=\frac{V_{b}}{T}+\frac{2}{T}\sum_{k=1}^{T-1}(1-\frac{k}{T})R_{b}(k)
\end{align}
The third term is
\begin{align}\label{eqVe}
    \Var(\frac{1}{NT}\sum_{n=1}^{N}\sum_{t=1}^{T}e_{nt})&=\frac{1}{N^{2}T^{2}}\sum_{n=1}^{N}\sum_{m=1}^{N}\Cov(\sum_{t=1}^{T}e_{nt},\sum_{t=1}^{T}e_{mt})\notag\\
    &=\frac{1}{N^{2}T^{2}}\sum_{n=1}^{N}\Var(\sum_{t=1}^{T}e_{nt})\notag\\
    &=\frac{1}{NT^{2}}[T\cdot\Var(e_{11})+2\sum_{k=1}^{T-1}(T-k)\Cov(e_{11},e_{1,1+k})]\notag\\
    &=\frac{V_{e}}{NT}+\frac{2}{NT}\sum_{k=1}^{T-1}(1-\frac{k}{T})R_{e}(k)
\end{align}
Plugging (\ref{eqVa})-(\ref{eqVe}) back to (\ref{eqV}) yields
\begin{align*}
    V&=\lim V_{a}+\frac{N}{T}V_{b}+2\frac{N}{T}\sum_{k=1}^{T-1}(1-\frac{k}{T})R_{b}(k)+\frac{V_{e}}{T}+\frac{2}{T}\sum_{k=1}^{T-1}(1-\frac{k}{T})R_{e}(k)\\
    &=V_{a}+cV_{b}+2c\sum_{k=1}^{\infty}R_{b}(k)
\end{align*}
To see $V<\infty$, first observe $V_{a}=\mathbb{E}[a_{n}^{2}]=\mathbb{E}[\mathbb{E}^{2}[X_{nt}|\alpha_{n}]]\leq\mathbb{E}[X_{nt}^{2}]<\infty$. And for every $k\in\mathbb{Z}_{\geq0}$, by Lemma 8.3.6 in \cite{durrett_probability_2019}
\begin{align*}
    |R_{b}(k)|\leq8\mathbb{E}^{\frac{1}{2r}}[|X_{nt}|^{4r}]\alpha_{\gamma}^{\frac{2r-1}{2r}}(k)
\end{align*}
Since $\gamma_{t}$ is strong mixing with size $\frac{2r}{r-1}$ (Assumption \ref{a4} (ii)), then $\alpha_{\gamma}(k)$ decays faster than $k^{-\frac{2r}{r-1}}$ and $\alpha_{\gamma}^{\frac{2r-1}{2r}}(k)=O(k^{-\frac{2r-1}{r-1}})$. As $\sum_{k=0}^{\infty}k^{-\frac{2r-1}{r-1}}<\infty$, $\sum_{k=0}^{\infty}\alpha_{\gamma}^{\frac{2r-1}{2r}}(k)<\infty$. Hence, by $\mathbb{E}[|X_{nt}|^{4r}]<\infty$, $|\sum_{k=0}^{\infty}R_{b}(k)|<\infty$. Thus, $V<\infty$.
\subsection{The Finiteness of $\sum_{k=-\infty}^{\infty}|kR_{b}(k)|$}\label{Append_detail2}
For every $k\in\mathbb{Z}_{\geq0}$, $|R_{b}(k)|\leq8\mathbb{E}^{\frac{1}{2r}}[|X_{nt}|^{4r}]\alpha_{\gamma}^{\frac{2r}{2r-1}}(k)$, then $|kR_{b}(k)|\leq8\mathbb{E}^{\frac{1}{2r}}[|X_{nt}|^{4r}]k\alpha_{\gamma}^{\frac{2r}{2r-1}}(k)$. Since $\gamma_{t}$ is strong mixing with size $\frac{2r}{r-1}$, then $\alpha_{\gamma}^{\frac{2r-1}{2r}}(k)=O(k^{-\frac{2r-1}{r-1}})$ and $k\alpha_{\gamma}^{\frac{2r}{2r-1}}(k)=O(k^{-\frac{r}{r-1}})$. Then, $\sum_{k=1}^{\infty}k^{-\frac{r}{r-1}}<\infty$ implies $\sum_{k=-\infty}^{\infty}|k|\alpha_{\gamma}^{\frac{2r}{2r-1}}(|k|)<\infty$. Thus, $\sum_{k=-\infty}^{\infty}|kR_{b}(k)|<\infty$.
\subsection{The Derivation of $Bias(\hat{\sigma}_{N,T,b,l}^{2},V)$}\label{Append_detail3}
\begin{align}\label{eqsighatmean}
    \mathbb{E}[\hat{\sigma}_{N,T,b,l}^{2}]&=\mathbb{E}[\frac{b}{N_{b}\cdot q}\sum_{i=1}^{N_{b}}\sum_{k=1}^{q}(\hat{\theta}_{b,l,i,k}-\Bar{\theta}_{N,T,b,l})^{2}]\notag\\
    &=\mathbb{E}[\frac{b}{N_{b}\cdot q}\sum_{i=1}^{N_{b}}\sum_{k=1}^{q}\hat{\theta}_{b,l,i,k}^{2}-b\cdot\Bar{\theta}_{N,T,b,l}^{2}]\notag\\
    &=b\cdot\Var(\hat{\theta}_{b,l,i,k})-b\cdot\mathbb{E}[\Bar{\theta}_{N,T,b,l}^{2}]
\end{align}
The first part is equal to
\begin{align}\label{eqsighatmean_1}
    b\cdot\Var(\hat{\theta}_{b,l,i,k})&=V_{a}+\frac{b}{l}\sum_{k=-l+1}^{l-1}(1-\frac{|k|}{l})R_{b}(k)+\frac{1}{l}\sum_{k=-l+1}^{l-1}(1-\frac{|k|}{l})R_{e}(k)
\end{align}
while the second part can be decomposed as
\begin{align}\label{eqsighatmean_2}
    b\cdot\mathbb{E}[\Bar{\theta}_{N,T,b,l}^{2}]&=b\cdot\mathbb{E}[(\frac{1}{N_{b}\cdot q}\sum_{i=1}^{N_{b}}\sum_{k=1}^{q}\hat{\theta}_{b,l,i,k})^{2}]\notag\\
    &=b\cdot\mathbb{E}[(\frac{1}{N_{b}}\sum_{i=1}^{N_{b}}\frac{1}{b}\sum_{n\in I_{i}}a_{n}+\frac{1}{q}\sum_{k=1}^{q}\frac{1}{l}\sum_{t=k}^{k+l-1}b_{t}+\frac{1}{N_{b}\cdot q}\sum_{i=1}^{N_{b}}\sum_{k=1}^{q}\frac{1}{bl}\sum_{n\in I_{i}}\sum_{t=k}^{k+l-1}e_{nt})^{2}]\notag\\
    &=b\cdot\mathbb{E}[(\frac{1}{N}\sum_{n=1}^{N}a_{n}+\frac{1}{q\cdot l}(\sum_{t=1}^{l-1}t\cdot b_{t}+\sum_{t=q+1}^{T}(T-t+1)b_{t}+\sum_{t=l}^{q}l\cdot b_{t})\notag\\
    &+\frac{1}{N\cdot q\cdot l}\sum_{n=1}^{N}(\sum_{t=1}^{l-1}t\cdot e_{nt}+\sum_{t=q+1}^{T}(T-t+1)e_{nt}+\sum_{t=l}^{q}l\cdot e_{nt}))^{2}]\notag\\
    &=b\cdot\mathbb{E}[(\frac{1}{N}\sum_{n=1}^{N}a_{n})^{2}]+b\cdot\mathbb{E}[\frac{1}{q^{2}l^{2}}(\sum_{t=1}^{T}l\cdot b_{t}-\underbrace{\sum_{t=1}^{l}(l-t)b_{t}}_{S_{1}}-\underbrace{\sum_{t=q}^{T}(T-t+1-l)b_{t}}_{S_{2}})^{2}]\notag\\
    &+b\cdot\mathbb{E}[\frac{1}{N^{2}q^{2}l^{2}}(\sum_{n=1}^{N}(\sum_{t=1}^{l-1}t\cdot e_{nt}+\sum_{t=q+1}^{T}(T-t+1)e_{nt}+\sum_{t=l}^{q}l\cdot e_{nt}))^{2}]\notag\\
    &=\frac{b}{N}V_{a}+\frac{bT}{q^{2}}V_{b}+2\frac{bT}{q^{2}}\sum_{k=1}^{T-1}(1-\frac{k}{T})R_{b}(k)+\frac{b}{q^{2}l^{2}}\mathbb{E}[S_{1}^{2}+S_{2}^{2}+2S_{1}S_{2}]-\frac{2b}{q^{2}l^{2}}\mathbb{E}[(\sum_{t=1}^{T}l\cdot b_{t})(S_{1}+S_{2})]\notag\\
    &+b\cdot\mathbb{E}[\frac{1}{N^{2}q^{2}l^{2}}(\sum_{n=1}^{N}(\sum_{t=1}^{l-1}t\cdot e_{nt}+\sum_{t=q+1}^{T}(T-t+1)e_{nt}+\sum_{t=l}^{q}l\cdot e_{nt}))^{2}]
\end{align}
For the term $\frac{b}{q^{2}l^{2}}\mathbb{E}[S_{1}^{2}+S_{2}^{2}+2S_{1}S_{2}]$, observe
\begin{align*}
    \frac{b}{q^{2}l^{2}}\mathbb{E}[S_{1}^{2}]&=\frac{b}{q^{2}l^{2}}\mathbb{E}[\sum_{t=1}^{l}\sum_{p=1}^{l}(l-t)(l-p)b_{t}b_{p}]\\
    &\leq\frac{b(l-1)^{2}}{q^{2}l^{2}}\mathbb{E}[\sum_{t=1}^{l}\sum_{p=1}^{l}b_{t}b_{p}]\\
    &\leq\frac{b}{q^{2}}\sum_{k=-l}^{k=l}(l-|k|)R_{b}(k)\\
    &=O(\frac{bl}{N^{2}})
\end{align*}
Similarly, $\frac{b}{q^{2}l^{2}}\mathbb{E}[S_{2}^{2}],\frac{b}{q^{2}l^{2}}\mathbb{E}[2S_{1}S_{2}]=O(\frac{bl}{N^{2}})$. Hence,
\begin{equation}\label{eqS1S2}
    \frac{b}{q^{2}l^{2}}\mathbb{E}[S_{1}^{2}+S_{2}^{2}+2S_{1}S_{2}]=O(\frac{bl}{N^{2}})
\end{equation}
For the term $\frac{2b}{q^{2}l^{2}}\mathbb{E}[(\sum_{t=1}^{T}l\cdot b_{t})(S_{1}+S_{2})]$, observe
\begin{align*}
    |\frac{2b}{q^{2}l^{2}}\mathbb{E}[(\sum_{t=1}^{T}l\cdot b_{t})S_{1}]|&=|\frac{2b}{q^{2}l^{2}}\mathbb{E}[(\sum_{t=1}^{T}l\cdot b_{t})(\sum_{t=1}^{l}(l-t)b_{t})]|\\
    &\leq\frac{2b}{q^{2}}|\mathbb{E}[\sum_{t=1}^{T}\sum_{p=1}^{l}b_{t}b_{p}]|\\
    &\leq\frac{2b}{q^{2}}(|\sum_{k=-l}^{k=l}(l-|k|)R_{b}(k)|+|\mathbb{E}[\sum_{t=l+1}^{T}\sum_{p=1}^{l}b_{t}b_{p}]|)\\
    &=\frac{2b}{q^{2}}(|\sum_{k=-l}^{k=l}(l-|k|)R_{b}(k)|+|\sum_{k=1}^{T-1}h(k)R_{b}(k)|)\\
    &=O(\frac{b}{N})
\end{align*}
where $h(k)=\begin{cases}
        l-|k-l|, & k<l\\
        l, & l\leq k\leq T-l\\
        l-|k-T+l|, & k>T-l
    \end{cases}$.\\
Hence,
\begin{equation}
    \frac{2b}{q^{2}l^{2}}\mathbb{E}[(\sum_{t=1}^{T}l\cdot b_{t})(S_{1}+S_{2})]=O(\frac{b}{N})
\end{equation}
For the term $b\cdot\mathbb{E}[\frac{1}{N^{2}q^{2}l^{2}}(\sum_{n=1}^{N}(\sum_{t=1}^{l-1}t\cdot e_{nt}+\sum_{t=q+1}^{T}(T-t+1)e_{nt}+\sum_{t=l}^{q}l\cdot e_{nt}))^{2}]$, $\mathbb{E}[e_{nt}e_{mp}]=0$ for any $n\neq m$ and any $t,p\in\mathbb{N}$. Then, 
\begin{align}\label{eqeterms}
    &b\cdot\mathbb{E}[\frac{1}{N^{2}q^{2}l^{2}}(\sum_{n=1}^{N}(\sum_{t=1}^{l-1}t\cdot e_{nt}+\sum_{t=q+1}^{T}(T-t+1)e_{nt}+\sum_{t=l}^{q}l\cdot e_{nt}))^{2}]\notag\\
    &=\frac{b}{N^{2}q^{2}l^{2}}\cdot\sum_{n=1}^{N}\mathbb{E}[(\sum_{t=1}^{l-1}t\cdot e_{nt}+\sum_{t=q+1}^{T}(T-t+1)e_{nt}+\sum_{t=l}^{q}l\cdot e_{nt})^{2}]\notag\\
    &\leq\frac{b}{Nq^{2}}\mathbb{E}[(\sum_{t=1}^{T}e_{nt})^{2}]\notag\\
    &=\frac{b}{Nq^{2}}\sum_{k=-T+1}^{T-1}(T-|k|)R_{e}(k)\notag\\
    &=o(\frac{b}{N})
\end{align}
Plugging (\ref{eqS1S2})-(\ref{eqeterms}) back to (\ref{eqsighatmean_2}) yields
\begin{equation}\label{eqsighatmean_22}
    b\cdot\mathbb{E}[\Bar{\theta}_{N,T,b,l}^{2}]=O(\frac{b}{N})+O(\frac{bl}{N^{2}})\\
\end{equation}
Therefore, plugging (\ref{eqsighatmean_1}) and (\ref{eqsighatmean_22}) back to (\ref{eqsighatmean}) gives
\begin{equation*}
   \mathbb{E}[\hat{\sigma}_{N,T,b,l}^{2}]=V_{a}+\frac{b}{l}\sum_{k=-l+1}^{l-1}(1-\frac{|k|}{l})R_{b}(k)+\frac{1}{l}\sum_{k=-l+1}^{l-1}(1-\frac{|k|}{l})R_{e}(k)+O(\frac{b}{N})+O(\frac{bl}{N^{2}})\\
\end{equation*}
Thus, we have the bias of $\hat{\sigma}_{N,T,b,l}^{2}$ as
\begin{align*}
    Bias(\hat{\sigma}_{N,T,b,l}^{2},V)&=\mathbb{E}[\hat{\sigma}_{N,T,b,l}^{2}]-V\notag\\
    &=-\frac{b}{l^{2}}\sum_{k=-\infty}^{\infty}|k|R_{b}(k)+(\frac{b}{l}-c)\sum_{k=-\infty}^{\infty}R_{b}(k)+2\frac{b}{l}\sum_{k=l}^{\infty}(\frac{k}{l}-1)R_{b}(k)\notag\\
    &+\frac{1}{l}\sum_{k=-l+1}^{l-1}(1-\frac{|k|}{l})R_{e}(k)+O(\frac{b}{N})+O(\frac{bl}{N^{2}})
\end{align*}
\subsection{The Derivation of $Bias(\hat{\sigma}_{N,T,b,l}^{2,BC},V)$}
\begin{align*}
    Bias(\hat{\sigma}_{N,T,b,l}^{2,BC},V)&=\mathbb{E}[\hat{\sigma}_{N,T,b,l}^{2,BC}]-V\\
    &=\mathbb{E}[\hat{\sigma}_{N,T,b,l}^{2}]-V-D\cdot(\mathbb{E}[\hat{\sigma}_{N,T,\Tilde{b},\Tilde{l}}^{2}]-V-\mathbb{E}[\hat{\sigma}_{N,T,b,l}^{2}]+V)\\
    &=Bias(\hat{\sigma}_{N,T,b,l}^{2},V)-D\cdot(Bias(\hat{\sigma}_{N,T,\Tilde{b},\Tilde{l}}^{2},V)-Bias(\hat{\sigma}_{N,T,b,l}^{2},V))\\
    &=-\frac{1}{l}\sum_{k=-\infty}^{\infty}c|k|R_{b}(k)+\frac{1}{l}\sum_{k=-l+1}^{l-1}(1-\frac{|k|}{l})R_{e}(k)+o(\frac{1}{l})\\
    &-\frac{\Tilde{l}}{l-\Tilde{l}}\cdot(-\frac{1}{\Tilde{l}}\sum_{k=-\infty}^{\infty}c|k|R_{b}(k)+\frac{1}{\Tilde{l}}\sum_{k=-\Tilde{l}+1}^{\Tilde{l}-1}(1-\frac{|k|}{\Tilde{l}})R_{e}(k)+\frac{1}{l}\sum_{k=-\infty}^{\infty}c|k|R_{b}(k)-\frac{1}{l}\sum_{k=-l+1}^{l-1}(1-\frac{|k|}{l})R_{e}(k))\\
    &=[-\frac{1}{l}-\frac{\Tilde{l}}{l-\Tilde{l}}\cdot(-\frac{1}{\Tilde{l}}+\frac{1}{l})]\cdot[\sum_{k=-\infty}^{\infty}c|k|R_{b}(k)-\sum_{k=-\Tilde{l}+1}^{\Tilde{l}-1}R_{e}(k)]+\frac{2}{l}(1+\frac{\Tilde{l}}{l-\Tilde{l}})\sum_{k=\Tilde{l}}^{l-1}R_{e}(k)\\
    &-\frac{1}{l^{2}}(1+\frac{\Tilde{l}}{l-\Tilde{l}})\sum_{k=-l+1}^{l-1}|k|R_{e}(k)+\frac{1}{l-\Tilde{l}}\sum_{k=-\Tilde{l}+1}^{\Tilde{l}-1}\frac{|k|}{\Tilde{l}}R_{e}(k)+o(\frac{1}{l})\\
    &=\frac{2}{l-\Tilde{l}}\sum_{k=\Tilde{l}}^{l-1}R_{e}(k)-\frac{1}{l-\Tilde{l}}\sum_{k=-l+1}^{l-1}\frac{|k|}{l}R_{e}(k)+\frac{1}{l-\Tilde{l}}\sum_{k=-\Tilde{l}+1}^{\Tilde{l}-1}\frac{|k|}{\Tilde{l}}R_{e}(k)+o(\frac{1}{l})\\
\end{align*}
\subsection{The Equality $\mathbb{E}[X_{nt}X_{m,t+k}]=\mathbb{E}[b_{t}b_{t+k}]$}
\begin{align*}
    \mathbb{E}[X_{nt}X_{m,t+k}]&=\mathbb{E}[(a_{n}+b_{t}+e_{nt})(a_{m}+b_{t+k}+e_{m,t+k})]\\
    &=\mathbb{E}[a_{n}a_{m}+a_{n}b_{t+k}+a_{n}e_{m,t+k}+b_{t}a_{m}+b_{t}b_{t+k}+b_{t}e_{m,t+k}+e_{nt}a_{m}+e_{nt}b_{t+k}+e_{nt}e_{m,t+k}]
\end{align*}
By independence, $\mathbb{E}[a_{n}a_{m}]=\mathbb{E}[a_{n}b_{t+k}]=\mathbb{E}[b_{t}a_{m}]=0$. By the fact that $\{a_{n}\}$, $\{b_{t}\}$, and $\{e_{nt}\}$ are mutually uncorrelated, $\mathbb{E}[e_{nt}(a_{m}+b_{t+k})]=\mathbb{E}[e_{m,t+k}(a_{n}+b_{t})]=0$. By the conditional independence between $e_{nt}$ and $e_{mp}$ conditional on $(\gamma_{t},\gamma_{p})$, we have
\begin{align*}
    \mathbb{E}[e_{nt}e_{m,t+k}]&=\mathbb{E}[\mathbb{E}[e_{nt}e_{m,t+k}|\gamma_{t},\gamma_{t+k}]]\\
    &=\mathbb{E}[\mathbb{E}[e_{nt}|\gamma_{t},\gamma_{t+k}]\mathbb{E}[e_{m,t+k}|\gamma_{t},\gamma_{t+k}]]\\
    &=\mathbb{E}[\mathbb{E}[e_{nt}|\gamma_{t}]\mathbb{E}[e_{m,t+k}|\gamma_{t+k}]]\\
    &=\mathbb{E}[\mathbb{E}[X_{nt}-a_{n}-b_{t}|\gamma_{t}]\mathbb{E}[X_{m,t+k}-a_{m}-b_{t+k}|\gamma_{t+k}]]\\
    &=\mathbb{E}[(b_{t}-b_{t}-\mathbb{E}[a_{n}])(b_{t+k}-b_{t+k}-\mathbb{E}[a_{m}])]\\
    &=0
\end{align*}
Thus, $\mathbb{E}[X_{nt}X_{m,t+k}]=\mathbb{E}[b_{t}b_{t+k}]$ as desired.

\section{Additional Simulation Results}\label{Append_sim}
\subsection{Power}
In this part, we present the simulation result for power of various methods using the model specified in Section \ref{sec4}. The true value of $\beta_{1}$ is still 1, while we test the null hypothesis $H_{0}:\beta_{1}=b_{1}$ for $b_{1}\in[0.5,1.5]$ with a size of $5\%$. We set the sample size $N=T=100$ and run 1000 repetitions for each $\rho\in\{0,0.25,0.5,0.75\}$. Figure \ref{fig_power} shows the power for EHW, CRi, CRt, CGM, CHS, CV and two subsampling variance estimators under different level of dependence. All estimators tend to have higher power as $b_{1}$ is further away from the true value.
\begin{figure}[H]
    \centering
    \includegraphics[width=\textwidth]{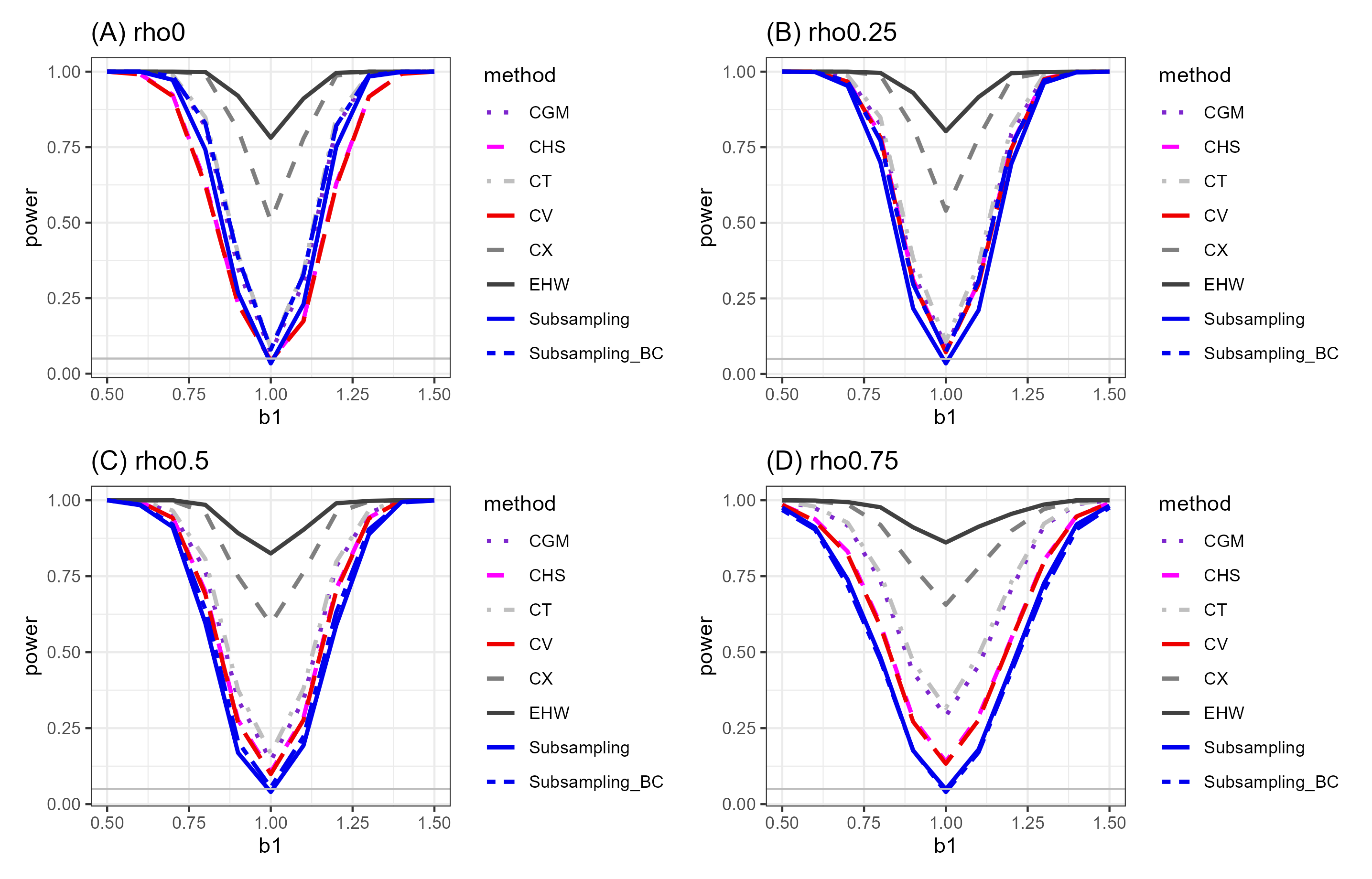}
    \caption{Powers for different null hypothesis with a size of $5\%$ and 1000 Monte Carlo repetitions under various dependence level. The two feasible subsampling variance estimators use the B\"uhlmann's iterative method to select subsample sizes. The sample size is set to be $N=T=100$.}
    \label{fig_power}
\end{figure}

\newpage
\bibliography{references}

\end{document}